\newcommand{\vb}[1]{\texttt{#1}}
\title{You Can't Solve These Super Mario Bros.\ Levels:\texorpdfstring{\\}{ }Undecidable Mario Games}
\author{
  MIT Hardness Group%
    \thanks{Artificial first author to highlight that the other authors (in
      alphabetical order) worked as an equal group. Please include all
      authors (including this one) in your bibliography, and refer to the
      authors as “MIT Hardness Group” (without “et al.”).}
\and
  Hayashi Ani%
    \thanks{MIT Computer Science and Artificial Intelligence Laboratory,
      32 Vassar St., Cambridge, MA 02139, USA, \protect\url{{joshuaa,edemaine,hhall314,ruizr,nvenkat}@mit.edu}}
\and
  Erik D. Demaine\footnotemark[2]
\andlinebreak
  Holden Hall\footnotemark[2]
\and
  Ricardo Ruiz\footnotemark[2]
\and
  Naveen Venkat\footnotemark[2]
}
\date{}
\newif\ifabstract
\newif\iffull
\makeatletter \hypersetup{pdftitle={\@title}}}
 \gdef\xxxmark{%
   \expandafter\ifx\csname @mpargs\endcsname\relax 
     \expandafter\ifx\csname @captype\endcsname\relax 
       \marginpar{xxx}
     \else
       xxx 
     \fi
   \else
     xxx 
   \fi}
 \gdef\xxx{\@ifnextchar[\xxx@lab\xxx@nolab}
 \long\gdef\xxx@lab[#1]#2{\textbf{[\xxxmark #2 ---{\sc #1}]}}
 \long\gdef\xxx@nolab#1{\textbf{[\xxxmark #1]}}
\gdef\fps@figure{!htbp}}
\let\realbfseries=\bfseries
\def\bfseries{\realbfseries\boldmath}
\def\andlinebreak{\end{tabular}\linebreak\begin{tabular}[t]{c}}
\newtheorem{theorem}{Theorem}[section]
\newtheorem{corollary}[theorem]{Corollary}
\theoremstyle{definition}
\let\epsilon=\varepsilon
\def\defn#1{\textbf{\textit{\boldmath #1}}}
\def\ICON#1{\hspace{0.1em}\raisebox{-0.6ex}{\includegraphics[height=\baselineskip]{figures/icons/#1}}\hspace{0.1em}\nolinebreak}
\newcommand{\figref}[1]{Figure~\ref{#1}}
\begin{document}
\maketitle

\begin{abstract}
  We prove RE-completeness (and thus undecidability) of several
  2D games in the Super Mario Bros.\ platform video game series:
  the New Super Mario Bros.\ series (original, Wii, U, and 2), and
  both Super Mario Maker games in all five game styles
  (Super Mario Bros.\ 1 and 3, Super Mario World, New Super Mario Bros.\ U,
  and Super Mario 3D World).
  These results hold even when we restrict to constant-size levels
  and screens, but they do require generalizing to allow
  arbitrarily many enemies at each location and onscreen,
  as well as allowing for exponentially large (or no) timer.
  Our New Super Mario Bros.\ constructions fit within one standard screen size.
  In our Super Mario Maker reductions, we work within the standard screen size
  and use the property that the game engine remembers offscreen objects
  that are global because they are supported by ``global ground''.
  To prove these Mario results, we build a new theory of counter gadgets
  in the motion-planning-through-gadgets framework, and provide a suite
  of simple gadgets for which reachability is RE-complete.
\end{abstract}

\section{Introduction}

In 2014, Hamilton \cite{braid} proved that solving a (bounded-size) level
in the 2008 video game \emph{Braid} is \defn{RE-complete}
(the same difficulty as the halting problem, and thus undecidable),
even without its famous time-travel mechanic.
The reduction is from \defn{counter machines} \cite{counter,Minsky-book,smallCounters}:
a finite-state machine with a constant number of natural-number counters
equipped with increment, decrement, and jump-if-zero instructions.
The central idea was to represent the value of each counter in a Braid level
by the number of enemies occupying a particular location in the level,
exploiting that this number can be arbitrarily large even in a
bounded-size level.
The same paper conjectured that many other video games were amenable
to this approach.

In this paper, we prove that this approach extends to several 2D games
in the Super Mario Bros.\ platform video game series,%
\footnote{After all, ``Braid is a postmodern Super Mario Bros.''\ 
  \cite{braid-forbes}.}
when we generalize them to remove any limits on time,
the total number of enemies, or the number of enemies at each location.
Most of these games (even the original Super Mario Bros.\ from 1985)
have mechanics for spawning arbitrarily many enemies over time,
but it is difficult to find the right combination of mechanics
to implement the specific functionality of an
increment/decrement/jump-if-zero counter.
Specifically, we show that the following games are RE-complete
by building computer machines where the number of enemies
in a particular location represents the value of each counter:
\begin{enumerate}
\item The \emph{New Super Mario Bros.}\ series (Section~\ref{sec:NSMB}):
  New Super Mario Bros., New Super Mario Bros.\ Wii,
  New Super Mario Bros.\ U, and New Super Mario Bros.\ 2.
  One reduction covers all four games,
  using their powerful ``event'' game mechanic,
  where Mario's location can toggle the existence of blocks.
  We build an entire universal counter machine within a single screen
  of the Wii version, so solving even a single-screen level is RE-complete.

\item The \emph{Super Mario Maker} series (Section~\ref{sec:SMM}):
  Super Mario Maker 1 in all four game styles
  (Super Mario Bros.\ 1, Super Mario Bros.\ 3, Super Mario World,
  and New Super Mario Bros.\ U)
  and Super Mario Maker 2 in all five game styles
  (the same four, plus Super Mario 3D World).
\end{enumerate}

All of these games are in RE: if a level is solvable, then there is a finite
algorithm to solve them, by trying all possible sequences of inputs to the
game, and simulating the result.  Because any solvable level is (by definition)
solvable in finite time, and the state in the game after any finite time
is finite, this algorithm is finite.  Thus, to prove RE-completeness,
it remains to prove RE-hardness.

To simplify the process of proving such games RE-hard using counter machines,
we develop in Section~\ref{sec:Counter Gadgets}
a new theory of ``counter gadgets'' which shows that
a \emph{single} gadget diagram suffices to prove RE-hardness
in a one-player game like Super Mario Bros.
This framework builds on the \defn{motion-planning-through-gadgets framework}
developed in recent years
\cite{Gadgets_ITCS2020,Doors_FUN2020,GadgetsIO_WALCOM2022,GadgetsVictory_WALCOM2022,GadgetsChecked_FUN2022,GadgetsVictory_WALCOM2022,lynch2020framework,hendrickson2021gadgets},
starting with FUN 2018 \cite{Toggles_FUN2018}.
In the single-player version of this framework, one agent (Mario)
traverses an environment consisting of local ``gadgets'',
with the locations of the gadgets connected together in a graph
(representing freely traversable connections).
Each \defn{gadget} has a finite set of states and a finite set of
possible transitions, where a transition $(q, a) \to (r, b)$ means that,
when the gadget is in state~$q$, the agent can enter at location $a$
and leave at location~$b$, while changing the state of the gadget to~$r$.
We generalize this framework to allow for \emph{infinite-state} gadgets
called \defn{counter gadgets}, where the states are the natural numbers
and the traversals are among common operations such as increment,
decrement, traversable-if-zero, and traversable-if-nonzero.
Specifically, we show that any one of the following counter gadgets
suffices to prove RE-completeness of the \defn{reachability} problem
(can the agent get from one location to another?):
\begin{enumerate}
\item \textbf{Inc-Dec-JZ:} One traversal path that increments the counter value,
  another that decrements the counter value unless it is already zero,
  and a third that leads the agent to two different locations depending
  on whether the counter value is zero.
\item \textbf{Inc-JZDec:} One traversal path that increments the counter value,
  and another that leads the agent to two different locations depending
  on whether the counter value is zero, and if it is not, decrements the
  counter.
\item \textbf{Inc-DecNZ-PZ:}
  One traversal path that increments the counter value,
  another that decrements the counter value but which can be traversed
  only if the counter is nonzero,
  and a third that is traversable only if the counter value is zero.

  Notably, this RE-hardness result holds even when the connections between
  gadgets form a planar graph, so there is no need for a crossover gadget.
\item \textbf{\boldmath Inc$[a,b]$-DecNZ$[c,d]$-PZ:}
  A generalization of the previous gadget where, when the agent traverses
  an increment or decrement path, it gets to choose how much to increment
  or decrement, within a range of $[a,b]$ or $[c,d]$, respectively,
  where $a,c > 0$.
  This robustness to unknown gadget behavior
  is helpful for building gadgets in video games,
  which can require very careful timing and alignment to force a
  specific number of increments or decrements,
  but forcing at least one and at most some constant is relatively easy.
\end{enumerate}

Specifically, we build an Inc$[1,2]$-DecNZ$[1,2]$-PZ gadget in
the New Super Mario Bros.\ series, and we build three different
Inc-DecNZ-PZ gadgets for different variants of the Super Mario Maker series.
Some of our gadgets are available for download and play \cite{github}.
On the plus side, we need to build only one main gadget for each game,
together with a crossover gadget in some cases (such as New Super Mario Bros.).
On the negative side, as we will see, that one main gadget is quite complex.

By contrast, the Braid proof \cite{braid} had six individual gadgets.
We can instead combine three of them (Lever Pull, Counter, and Branch)
in a straightforward way to build an Inc-JZDec gadget;
see Figure~\ref{fig:braid}.
Then we only need a crossover gadget for the agent (Tim),
which is one of the three crossover gadgets in \cite{braid};
the other two are no longer necessary.

\begin{figure}[t]
  \centering
  \includegraphics[width=\linewidth]{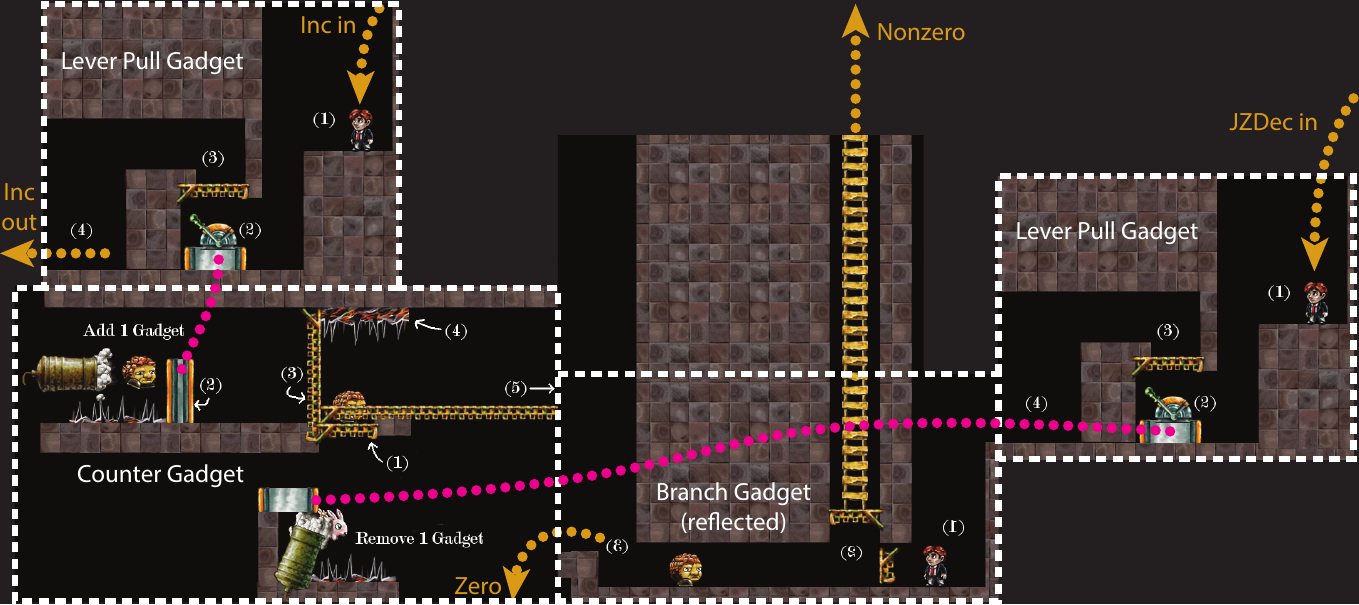}
  \caption{An Inc-JZDec counter gadget in Braid, built from
    the Level Pull, Counter, and Branch gadgets of \cite{braid}.}
  \label{fig:braid}
\end{figure}

Like Hamilton, we conjecture that our counter-gadget framework
can be applied to prove RE-hardness of other video games as well.
We discuss further possibilities for Super Mario Bros.\ games
in Section~\ref{sec:open}.

\section{Counter Gadgets}
\label{sec:Counter Gadgets}

In this paper, we reduce from ``reachability with gadgets'',
first explored in \cite{Toggles_FUN2018,Gadgets_ITCS2020}.
We define a \defn{gadget} $G = (Q, L, T)$ to consist of
a set $Q$ of \defn{states} (not necessarily finite),
a finite set $L$ of \defn{locations},
and a set $T \subseteq (Q\times L)^2$ of allowed \defn{transitions}
on pairs of locations and states,
each written in the form $(q,a) \to (r,b)$
where $q,r \in Q$ and $a,b \in L$.

An example of a gadget is the \defn{1-toggle}: it has a single path that the player can cross in only one direction, and every time they do, the allowed direction flips. \figref{1-toggle} gives a graphical representation. In this case, there are two locations $L = \{a, b\}$, two states $Q = \{R, L\}$, and $T = \{(0, a)\to(1, b), (1, b)\to(0, a)\}$.


\begin{figure}[t]
    \centering
    \includegraphics[scale=1.0]{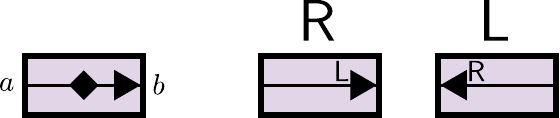}
    \caption{A 1-toggle, along with its state diagram on the right. When the player goes from $a$ to $b$, the arrow flips.}
    \label{1-toggle}
\end{figure}

Here, we consider gadgets with an infinite number of states,
namely, one for each natural number.
We further restrict a \defn{counter gadget}
to consist of \defn{counter components}
(see \figref{gadget-components})
that interact with each other when put in the same gadget:

\begin{figure}[t]
    \centering
    \includegraphics[width=\linewidth]{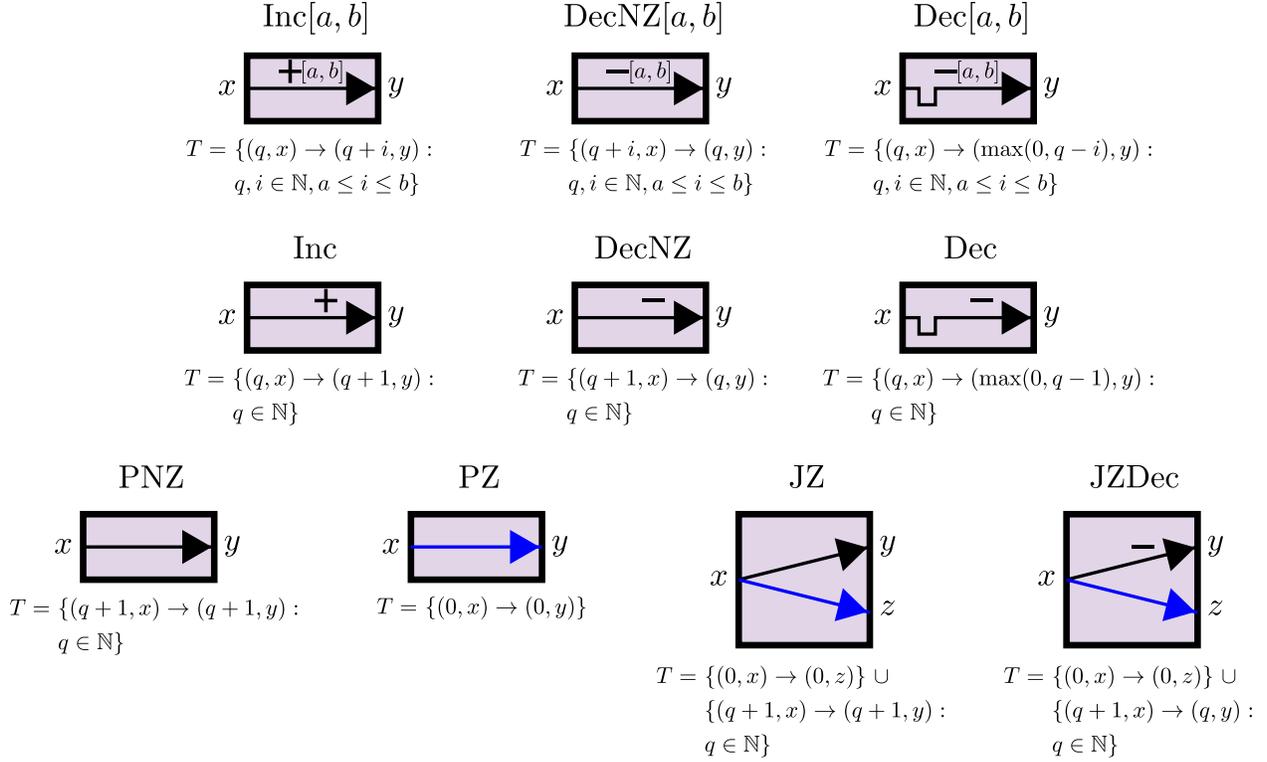}
    \caption{Counter components that are allowed in counter gadgets, along with their sets $T$ of allowed traversals.}
    \label{gadget-components}
\end{figure}

\begin{itemize}
    \item \textbf{\boldmath Inc$[a, b]$}. This is a directed tunnel that is always traversable. When the player traverses it, they choose a natural number $i$ such that $a\le i\le b$. The gadget's state increments by~$i$.
    \item \textbf{\boldmath DecNZ$[a, b]$}. This is a directed tunnel that is traversable if and only if the gadget's state is at least $a$. When the player traverses it, they choose a natural number $i$ such that $a\le i\le \min(s, b)$, where $s$ is the gadget's state. The gadget's state decrements by $i$.
    \item \textbf{\boldmath Dec$[a, b]$}. This is like DecNZ$[a, b]$, except that it is always traversable, and if the gadget's state would become negative, it instead becomes 0.
    \item \textbf{Inc}, \textbf{DecNZ}, \textbf{Dec}. These are short for Inc$[1, 1]$, DecNZ$[1, 1]$, and Dec$[1, 1]$, respectively.
    \item \textbf{PZ}. This is a directed tunnel that is traversable if and only if the gadget's state is 0. It does not change the state.
    \item \textbf{PNZ}. This is a directed tunnel that is traversable if and only if the gadget's state is not 0. It does not change the state. This is only defined for convenience of defining the JZ switch below.
    \item \textbf{JZ}. This is a switch formed by putting a PZ tunnel and a PNZ tunnel in the same gadget and combining their entrances.
    \item \textbf{JZDec}. This is a switch formed by putting a PZ tunnel and a DecNZ tunnel in the same gadget and combining their entrances.
\end{itemize}

\subsection{Undecidable Gadgets via Counter Machines}

A \defn{system} of gadgets consists of instances of gadgets,
an initial state for each gadget,
and a graph connecting the gadgets' locations together.
In the \defn{reachability} problem, we are given a system of gadgets,
a start location $s$, and a goal location $t$,
and we want to know whether the player can get from $s$ to $t$
by making a sequence of gadget traversals and following edges of the
connection graph.

With infinite-state gadgets, sometimes reachability is undecidable, since we can simulate counter machines. We use the fact that the counter-machine halting problem is RE-hard \cite{counter} as a starting point. First, we give a brief introduction to counter machines.

A \defn{counter machine} is a set of counters and a sequence of instructions that run on those counters. The instructions are:
\begin{itemize}
    \item $\vb{Inc(}c\vb{)}$: Add 1 to counter $c$ and move on to the next instruction
    \item $\vb{Dec(}c\vb{)}$: Subtract 1 from counter $c$, leaving its value alone if it was 0, and move on to the next instruction.
    \item $\vb{JZ(}c\vb{,}i\vb{)}$: Check if counter $c$ is 0. If so, jump to instruction $i$. Otherwise, move on to the next instruction.
    \item $\vb{Halt}$: Stop the machine.
\end{itemize}
It is RE-hard to determine whether a counter machine
reaches a $\vb{Halt}$ instruction,
even with just two counters \cite{counter}, \cite[pp.~255--258]{Minsky-book}.

\subsection{Inc-Dec-JZ is RE-complete}

The Inc-Dec-JZ gadget (\figref{inc-dec-jz}) consists of, as the name indicates, an Inc tunnel, a Dec tunnel, and a JZ switch.
We prove that reachability with this gadget is NP-hard.
The proof involves duplicating the Inc and Dec tunnels a bunch, then duplicating the JZ switch a bunch, then simulating a counter machine using multiple of the gadget, using each set of Inc, Dec, and JZ components as an instruction. We will show RE-hardness again, but with a different method where we build a gadget made for flow control, and use multiple copies of that gadget along with unaltered Inc-Dec-JZ gadgets.

\begin{figure}
    \centering
    \includegraphics[scale=1.0]{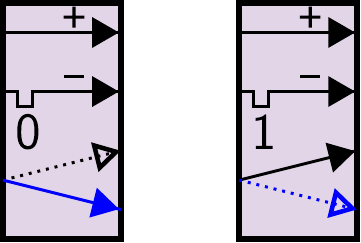}
    \caption{The Inc-Dec-JZ gadget, shown in states 0 and 1.}
    \label{inc-dec-jz}
\end{figure}

\begin{theorem}
    Reachability with Inc-Dec-JZ is RE-complete.
\end{theorem}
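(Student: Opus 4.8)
The plan is to establish RE-completeness in two parts. Membership in RE was already argued in the introduction: any solvable reachability instance is solved by some finite sequence of traversals, and since the configuration (all gadget states together with the agent's location) after any finite number of steps is finite and computable, brute-force search over traversal sequences is a finite algorithm whenever a solution exists. So the real work is RE-hardness, which I would prove by reducing from the halting problem for two-counter machines, which we may assume is RE-hard.

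Given a two-counter machine $M$ with instructions $\iota_1,\dots,\iota_n$, I would represent each of its two counters by a single Inc-Dec-JZ gadget whose state equals the current counter value. An $\vb{Inc}$ instruction is simulated by sending the agent through that gadget's Inc tunnel, a $\vb{Dec}$ through its Dec tunnel (whose ``leave it at $0$'' behavior exactly matches the counter-machine decrement), and a $\vb{JZ}$ by routing the agent into the JZ switch and continuing from the PZ exit to perform the jump or from the PNZ exit to fall through. The finite control of $M$ --- its instruction pointer --- would be encoded in the connection graph wiring the instruction ``nodes'' together, with $\vb{Halt}$ wired to the goal $t$ and the machine's start wired from $s$.

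The central obstacle is control-flow routing. Since each gadget has only one physical Inc tunnel (and one Dec tunnel and one JZ switch), every instruction touching a given counter must share these tunnels; after a shared traversal all such instructions rejoin at a single exit location, and the agent --- who is free to choose which outgoing edge to follow --- could then proceed to the successor of the \emph{wrong} instruction, potentially reaching $t$ even when $M$ does not halt. To forbid this I would interpose flow-control gadgets, themselves assembled from copies of Inc-Dec-JZ, that record a ``return address'' before a shared traversal and afterward force the agent onto exactly the successor edge of the instruction it actually came from; honest decoding of the stored address is enforced by the zero-test and decrement behavior of the constituent gadgets, so that no shortcut route exists.

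Finally, I would prove faithfulness: the agent can reach $t$ from $s$ if and only if $M$ halts. The forward direction is immediate, as the agent can mirror the unique computation of $M$. For the converse I would argue that every legal traversal sequence projects to a legal run of $M$ --- the Dec and JZ semantics prevent the agent from faking the sign of a counter, and the flow-control gadgets prevent it from mis-routing the instruction pointer --- so reaching $t$ forces $M$ to reach $\vb{Halt}$. I expect the main difficulty to lie in the flow-control construction and in ruling out all cheating traversals (this converse direction), rather than in the counter representation itself.
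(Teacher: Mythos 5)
Your proposal is correct and follows essentially the same route as the paper: it reduces from counter-machine halting with one Inc-Dec-JZ gadget per counter, and resolves the shared-tunnel routing problem with per-instruction flow-control gadgets (built from Inc-Dec-JZ) that record a ``return address'' whose honest decoding is forced by the stuck-at-zero decrement semantics --- exactly the paper's Inc-DecNZ-DecNZ instruction gadget, with \vb{Halt} wired to the goal. The paper simply makes your sketched flow-control gadget explicit; no substantive difference remains.
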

\begin{proof}
    We show this by simulating a counter machine with $\vb{Inc(}c\vb{)}$, $\vb{Dec(}c\vb{)}$, and $\vb{JZ(}c\vb{,}i\vb{)}$ instructions, which increment counter $c$, decrement counter $c$ (saturated at 0), and jump to instruction $i$ if $c=0$, respectively.

    First, we build an \defn{Inc-DecNZ-DecNZ} gadget to help with flow control. We simulate one as shown in \figref{inc-decnz-decnz}. If the agent enters via $\textrm{Inc in}$, they increment the top gadget and leave. If the agent enters via $\textrm{DecNZ}_i\textrm{ in}$, they eventually get stuck if the top counter is 0. Otherwise, they increment the middle/bottom gadget, decrement the top gadget, and decrement the gadget that they incremented before, leaving via $\textrm{DecNZ}_i\textrm{ out}$.

\begin{figure}
    \centering
    \includegraphics[scale=0.75]{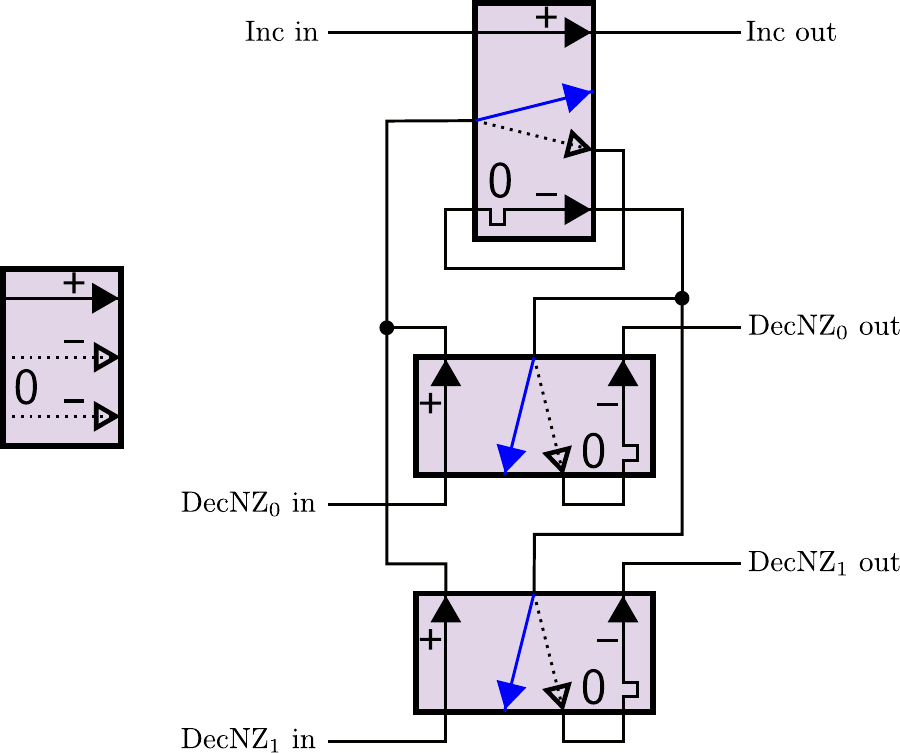}
    \caption{The Inc-DecNZ-DecNZ gadget, along with its simulation by Inc-Dec-JZ gadgets.}
    \label{inc-decnz-decnz}
\end{figure}

    To construct the counter machine, we use an Inc-Dec-JZ gadget $c$ for each counter $c$, and an Inc-DecNZ-DecNZ gadget $i$ for each instruction $i\vb{:}\cdots$. At a high level, we connect gadgets so that the agent flow works as follows:
    \begin{itemize}
        \item For an instruction $i\vb{:Inc(}c\vb{)}$, the agent increments instruction gadget $i$, increments the counter gadget $c$, finds the incremented instruction gadget $i$ and decrements it, and then moves on to the next instruction gadget $i+1$.
        \item For an instruction $i\vb{:Dec(}c\vb{)}$, the agent does the same as above, except that they decrement the counter gadget $c$.
        \item For an instruction $i\vb{:JZ(}c\vb{,}i'\vb{)}$, the agent increments instruction gadget $i$ and goes to check whether counter gadget $c$ is in state 0. If it is, they decrement gadget $i$ and branch to instruction gadget $i'$. Otherwise, they decrement gadget $i$ using the other decrement path and move on to the next instruction gadget $i+1$.
    \end{itemize}
    \figref{inc-dec-jz-counter-machine} shows an example.

\begin{figure}
    \centering
    \includegraphics[width=\linewidth]{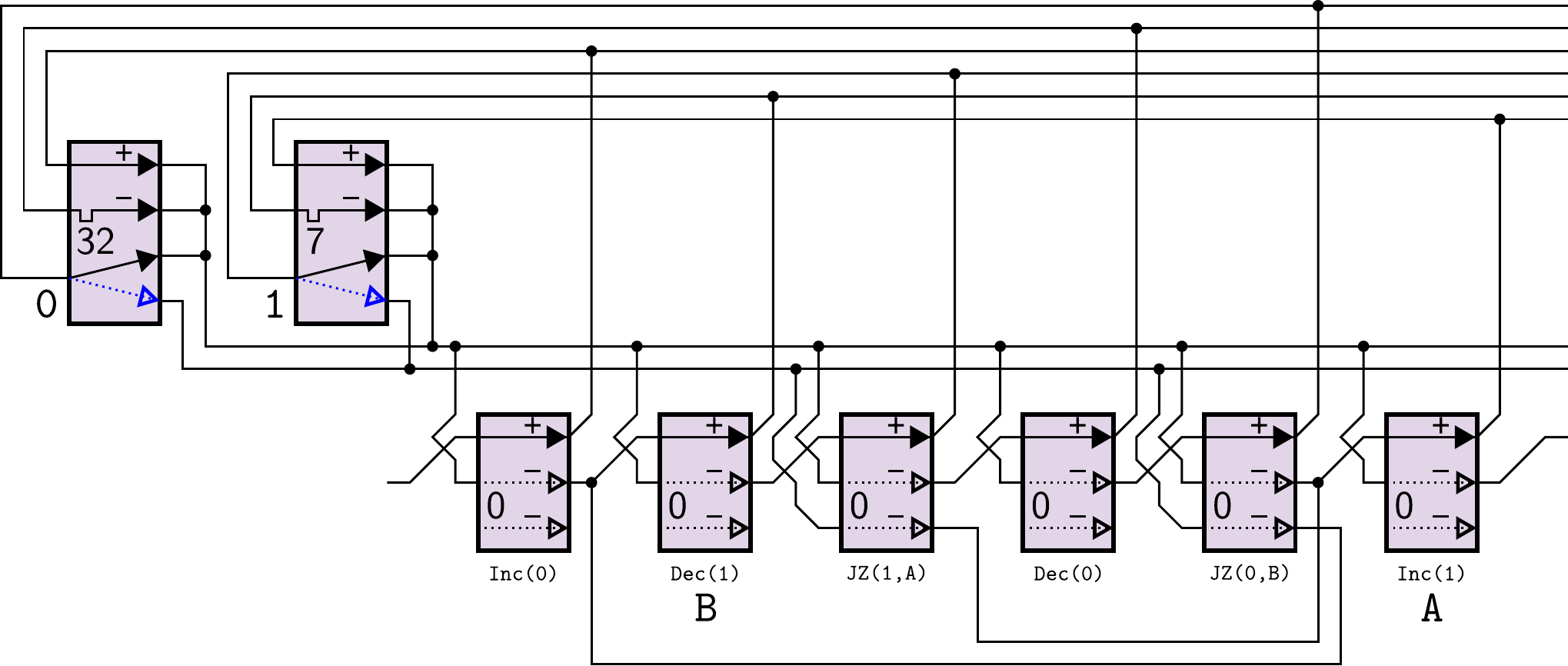}
    \caption{The Inc-Dec-JZ gadget simulating a counter machine.}
    \label{inc-dec-jz-counter-machine}
\end{figure}

    More concretely, we connect gadgets in the following ways, where $a[b]$ denotes location $b$ of gadget $a$, $I$ denotes the counter machine program, $I[i]$ denotes the $i$th instruction, and $C$ denotes the set of counter gadgets:
    \begin{itemize}
        \item $i[\textrm{Inc out}]\sim c[\textrm{Inc in}]$ for all $i, c$ where $I[i] = \vb{Inc(}c\vb{)}$.
        \item $i[\textrm{Inc out}]\sim c[\textrm{Dec in}]$ for all $i, c$ where $I[i] = \vb{Dec(}c\vb{)}$.
        \item $i[\textrm{Inc out}]\sim c[\textrm{JZ in}]$ for all $i, c, i'$ where $I[i] = \vb{JZ(}c\vb{,}i'\vb{)}$.
        \item $c[\textrm{Inc out}]\sim i[\textrm{DecNZ}_0\textrm{ in}]$ for all $i, c$ where $0\le i < |I|$ and $c\in C$.
        \item $c[\textrm{Dec out}]\sim i[\textrm{DecNZ}_0\textrm{ in}]$ for all $i, c$ where $0\le i < |I|$ and $c\in C$.
        \item $c[\textrm{JZ out }(\ne 0)]\sim i[\textrm{DecNZ}_0\textrm{ in}]$ for all $i, c$ where $0\le i < |I|$ and $c\in C$.
        \item $c[\textrm{JZ out }(= 0)]\sim i[\textrm{DecNZ}_1\textrm{ in}]$ for all $i, c$ where $0\le i < |I|$ and $c\in C$.
        \item $i[\textrm{DecNZ}_0\textrm{ out}]\sim (i+1)[\textrm{Inc in}]$ for all $i$ where $0\le i < |I|-1$.
        \item $i[\textrm{DecNZ}_1\textrm{ out}]\sim i'[\textrm{Inc in}]$ for all $i, c, i'$ where $I[i] = \vb{JZ(}c\vb{,}i'\vb{)}$.
    \end{itemize}
    
    The agent starts just in front of the first instruction gadget, and any instruction gadget that corresponds to a $\vb{Halt}$ instruction is replaced with a goal. Then the agent can win if and only if the counter machine halts, reducing the counter machine halting problem to reachability.
\end{proof}

\subsection{Inc-JZDec is RE-complete}

The Inc-JZDec gadget (\figref{inc-jzdec}) replaces the Dec and JZ components with a single JZDec switch.
We prove that reachability with this gadget is RE-hard by simulating the Inc-Dec-JZ gadget.

\begin{figure}
    \centering
    \includegraphics[scale=1.0]{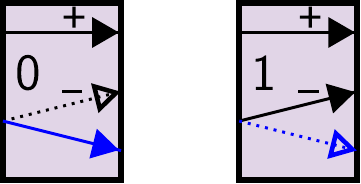}
    \caption{The Inc-JZDec gadget, shown in states 0 and 1.}
    \label{inc-jzdec}
\end{figure}

\begin{theorem}
    Reachability with Inc-JZDec is RE-complete.
\end{theorem}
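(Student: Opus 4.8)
The plan is to establish RE-hardness by constructing an Inc-Dec-JZ gadget out of Inc-JZDec gadgets and then appealing to the previous theorem that reachability with Inc-Dec-JZ is RE-complete. Membership in RE follows from the same semi-decision argument as in the introduction: along any finite winning traversal the counter values stay finite, so each reachable configuration (agent location together with the tuple of counter values) is finitely describable, and we can enumerate finite traversal sequences and simulate until the goal is reached. Thus the entire content of the theorem is the reduction from Inc-Dec-JZ to Inc-JZDec.

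Relative to what a single Inc-JZDec gadget already provides, the only missing pieces are a saturating decrement (Dec) and a \emph{non-destructive} zero test (JZ); the increment is reused verbatim, and the JZDec switch already bundles ``detect zero'' with ``decrement if nonzero.'' First I would simulate Dec by sending the agent into the JZDec switch and merging both of its exits---the zero exit (which leaves the state at $0$) and the decremented exit---into a single Dec-out port; since the agent has no choice once inside the switch, this realizes exactly the saturating decrement. Next I would simulate the non-destructive JZ by again sending the agent into the JZDec switch: its zero exit leads to the JZ zero-output with the state unchanged at $0$, while its nonzero exit, on which the state has just been decremented by $1$, is routed back through the gadget's own Inc tunnel to restore the value and only then to the JZ nonzero-output. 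Composing Inc, this Dec, and this JZ around one shared counter yields an Inc-Dec-JZ gadget.

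The main obstacle is \emph{faithfulness}. Because the JZDec switch is destructive and a single gadget's ports must serve all three simulated operations, the agent might take an unintended internal path---most dangerously, during a Dec it could follow the JZ restore path (re-incrementing and thereby skipping the decrement), or during a JZ it could refuse to restore and exit early. I would control this in two ways. Where an intended path has a unique continuation, I wire every deviation into a dead end, so a deviating agent simply gets stuck and reaches no goal; this is the same ``get stuck'' mechanism already used in the Inc-DecNZ-DecNZ construction. Where two operations would otherwise multiplex the same JZDec or Inc port (for instance the shared increment exit, which must return to the real-increment continuation in one context and to the JZ nonzero-output in another), I would separate the contexts with a constant-size auxiliary Inc-JZDec gadget used as a \emph{mode flag}: set it before entering the shared port, then test-and-reset it through its own JZDec switch on the way out to force the correct routing. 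This is essentially the return-address idea that already makes the counter-machine reduction of the previous theorem faithful.

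The final step is to verify that, under this wiring, every realizable traversal from an external in-port to an external out-port corresponds to a legal Inc-Dec-JZ transition and leaves the shared counter in the correct state, so the simulated system is reachability-equivalent to the original Inc-Dec-JZ system. Substituting this simulated gadget into the counter-machine reduction of the previous theorem then shows that the agent can reach the goal if and only if the counter machine halts, giving RE-hardness and hence, with the RE-membership argument above, RE-completeness.
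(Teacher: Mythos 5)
Your high-level route is the same as the paper's: reduce from reachability with Inc-Dec-JZ (the previous theorem) by simulating that gadget out of Inc-JZDec gadgets, with the same RE-membership argument. You also correctly isolate the central difficulty: the JZDec switch is destructive, and its ports (and the Inc tunnel used for the restore) must be multiplexed between the simulated Dec and JZ operations. The gap is that your proposed mechanism for this multiplexing does not work. A mode flag must be set \emph{before} the agent enters the shared JZDec switch, at which point it is not yet known which of the switch's two exits (zero or nonzero) will be taken; since each exit needs its own routing test, every traversal of the flagged operation leaves at least one flag un-consumed, and these stale flags misroute \emph{later, honest} traversals---this is forced routing, not optional cheating. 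Concretely, with a single counter $G$ and routing flags $R_z,R_n$ on its two JZDec exits (set by Dec): a \texttt{Dec} executed at counter value $0$ consumes $R_z$ but leaves $R_n=1$; the next \texttt{JZ} at a nonzero counter takes $G$'s nonzero exit, finds the stale $R_n=1$, and is routed to the Dec continuation, so the counter is decremented with no restore. Then the machine $0{:}\texttt{Dec}(c)$, $1{:}\texttt{Inc}(c)$, $2{:}\texttt{JZ}(c,5)$, $3{:}\texttt{JZ}(c,5)$, $4{:}\texttt{JZ}(d,0)$, $5{:}\texttt{Halt}$ never halts, yet the agent reaches the goal: the stale flag from instruction $0$ makes instruction $2$ silently zero the counter, so instruction $3$ jumps to \texttt{Halt}. (The dual wiring, where JZ sets the flags instead, fails by a symmetric counterexample in which a Dec silently skips its decrement.) So the simulation as sketched is unsound, not merely unverified.

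The paper escapes exactly this trap by storing the counter in \emph{two} synchronized copies $G_0,G_1$ and adding flow-control gadgets $H_0,H_1,H_2$. The Dec path first runs through $G_0$'s JZDec, which is used by no other operation, so its two exits need no interpreting flag at all; only after the nonzero outcome is already confirmed does the path set the routing flag $H_2$ and immediately decrement the shared copy $G_1$, whose nonzero exit is then guaranteed (since $G_0=G_1$), so $H_2$ is always consumed before the traversal ends. Likewise $H_1$ is incremented only immediately before its guaranteed consumption on the JZ restore path, and $H_0$ is a mere diode. The design principle you are missing is this pre-test on a second copy, which ensures that a flag is only ever set at a point where its consumption within the same traversal is certain, so no stale state can survive to corrupt future operations. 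Without that idea (or an equivalent one), the single-counter construction cannot be completed, and the reduction does not go through.
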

\begin{proof}
    We reduce from reachability with Inc-Dec-JZ by simulating the Inc-Dec-JZ gadget, as shown in \figref{inc-jzdec-sim-inc-dec-jz}. We use $G_0$ and $G_1$ to store the counter's value. If the agent enters via $\textrm{Inc in}$, they increment $G_0$ and $G_1$ and leave via $\textrm{Inc out}$. If the agent enters via $\textrm{Dec in}$, then if the counter is 0, they nearly immediately leave via $\textrm{Dec out}$. Otherwise, they decrement $G_0$, increment $H_2$, decrement $G_1$, decrement $H_2$, and leave via $\textrm{Dec out}$. Note that $H_2$ has no net change. If the agent enters via $\textrm{JZ in}$, if the counter is 0, they nearly immediately leave via $\textrm{JZ out }=0$. Otherwise, they decrement $G_1$, increment $H_1$, increment $G_1$, decrement $H_1$, and leave via $\textrm{JZ out }\ne 0$, leaving no net change (except in $H_0$, but that gadget is just used as a diode). So this simulation works.
\end{proof}

\begin{figure}
    \centering
    \includegraphics[scale=0.75]{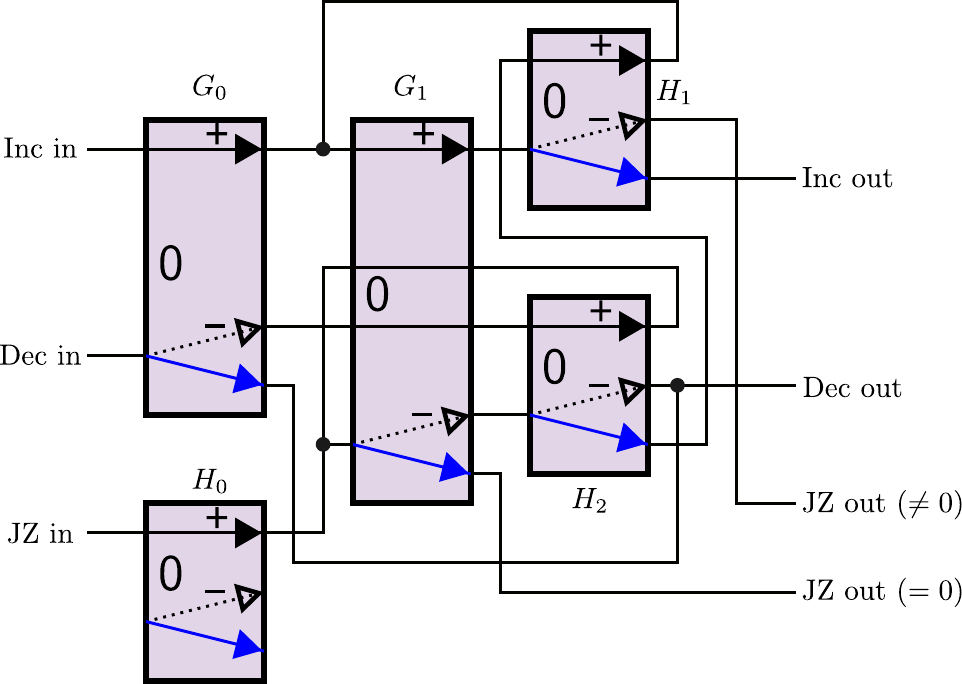}
    \caption{Simulation of Inc-Dec-JZ with Inc-JZDec. Gadgets $G_0$ and $G_1$ store the counter's value, while $H_0$, $H_1$ and $H_2$ are used for flow control.}
    \label{inc-jzdec-sim-inc-dec-jz}
\end{figure}

\subsection{Inc-DecNZ-PZ is RE-complete}

The Inc-DecNZ-PZ gadget (\figref{inc-decnz-pz}) replaces the JZDec switch with separate DecNZ and PZ tunnels.
This gadget can easily simulate the Inc-JZDec gadget,
by combining the entrances of DecNZ and PZ.

\begin{corollary}
    Reachability with Inc-DecNZ-PZ is RE-complete.
\end{corollary}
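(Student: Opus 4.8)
The plan is to establish the two halves of RE-completeness separately, leaning almost entirely on the Inc-JZDec theorem just proved. Membership in RE is immediate from the same semi-decision procedure used throughout the paper: a system of counter gadgets has a countable but computable state space (a natural number per gadget, together with the finite position in the connection graph), so we can enumerate all finite sequences of traversals and simulate each one exactly. If the goal $t$ is reachable, some finite sequence witnesses this, and the enumeration eventually finds it. Hence reachability with Inc-DecNZ-PZ is in RE, and it remains to prove RE-hardness.

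For hardness I would reduce from reachability with Inc-JZDec, which is RE-hard by the preceding theorem. The key observation is that the JZDec switch was \emph{defined} as a PZ tunnel and a DecNZ tunnel placed in the same gadget with their entrances combined --- which is exactly the raw material an Inc-DecNZ-PZ gadget already provides. So, given any system of Inc-JZDec gadgets, I would replace each one by a single Inc-DecNZ-PZ gadget and, in the connection graph, merge its DecNZ-entrance and PZ-entrance into one node; this merged node plays the role of the JZDec entrance. The Inc tunnel carries over unchanged, the PZ-out becomes the JZDec ``zero'' exit, and the DecNZ-out becomes the JZDec ``nonzero (decremented)'' exit. The start and goal locations, and all other edges, are inherited verbatim, so the transformation is purely local.

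The only thing that needs checking --- and the nearest thing to an obstacle --- is that this local rewrite is faithful, i.e.\ that the agent entering the merged node is forced into precisely the branch JZDec would take, with no spurious options. This is immediate from the complementary traversability conditions: the PZ tunnel is traversable exactly when the state is $0$, and the DecNZ tunnel exactly when the state is at least $1$, so in every state exactly one of the two merged tunnels is available. The agent therefore has no real choice at the merged entrance, and the resulting traversal --- decrementing by $1$ when nonzero, doing nothing when zero --- matches the JZDec switch step for step. Since the two systems admit the same agent trajectories, the goal is reachable in one if and only if it is reachable in the other, completing the reduction and hence the proof of RE-hardness.
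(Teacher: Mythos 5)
Your proposal is correct and follows exactly the paper's argument: the paper also observes that an Inc-DecNZ-PZ gadget simulates Inc-JZDec by merging the DecNZ and PZ entrances (which is precisely how the JZDec switch was defined), and inherits RE-hardness from the preceding theorem. Your additional check that the two tunnels' traversability conditions are complementary, so the merged entrance behaves identically to JZDec, is the same (essentially definitional) observation the paper relies on implicitly.
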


\begin{figure}
    \centering
    \includegraphics[scale=1.0]{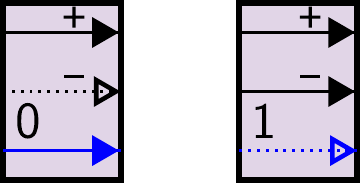}
    \caption{The Inc-DecNZ-PZ gadget, shown in states 0 and 1.}
    \label{inc-decnz-pz}
\end{figure}

In addition, we have a planar result.
In \defn{planar reachability}, we restrict to \defn{planar} systems of gadgets
where the graph of connections between gadgets' locations do not cross
gadgets or each other (except at common endpoints).

\begin{theorem}\label{planar}
    Planar reachability with any planar system of Inc-DecNZ-PZ gadgets is RE-complete.
\end{theorem}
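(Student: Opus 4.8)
The plan is to give a direct planar simulation of a two-counter machine (whose halting problem is RE-hard) using only Inc-DecNZ-PZ gadgets, so that no crossover gadget is ever required. Starting from the non-planar simulation implicit in the previous results, I would first locate exactly where crossings are forced. There are two sources: the \emph{return} connections, where after performing a counter operation the agent must come back to whichever instruction gadget is currently ``active'' (previously realized by wiring each counter's output to every instruction, an all-to-all bundle), and the \emph{jump} connections induced by $\vb{JZ}$ instructions, whose targets can lie anywhere in the program.

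The key tool for eliminating the return crossings is what I will call \emph{PZ-chaining}. Arrange the instruction gadgets along a single line and mark the instruction currently being executed by putting its gadget in a nonzero state (all others in state~$0$). To route a returning agent to the active instruction, send it in at one end of the line; at each instruction gadget the PZ tunnel is traversable exactly when that instruction is inactive, carrying the agent along to the next gadget, while the unique active instruction blocks PZ and forces the agent onto its DecNZ path, where the gadget is unmarked and execution resumes. This content-addressable walk replaces every all-to-all return bundle by one planar line traversal, and it serves equally for the initial dispatch. The two data counters can then be placed on opposite sides of the instruction line; since each instruction touches only its own counter, the instruction--counter connections form a planar $K_{2,n}$-type pattern.

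The main obstacle is the control flow. After executing instruction $i$ we must make the next instruction active, and for a jump this target $i'$ may sit far away along the line, so a direct dispatch edge from $i$ to $i'$ behaves like an arc over the line; arbitrary jump patterns interleave and therefore cross, and PZ-chaining does not by itself help, since reaching $i'$ to mark it still requires a path to $i'$. I would resolve this by preprocessing the machine so that its control-flow graph is planar before laying anything out: convert it to structured form (sequencing, selection, and iteration) using a constant number of auxiliary $0/1$ counters, which makes its control-flow graph planar, and then order the instruction gadgets along the line consistently with a planar embedding of that graph, so that all ``next-instruction'' and jump connections become nested, non-crossing arcs. Equivalently, one may fix a single universal two-counter machine whose (finite) control-flow graph has been planarized once and for all, encoding the input only in the initial counter value.

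The remaining work, which I expect to be the delicate part, is to verify that all three ingredients coexist in one planar embedding: the nested control-flow arcs, the two-sided data-counter accesses, and the PZ-chaining return/dispatch paths. In particular I would check that introducing the auxiliary counters for structuring does not smuggle a $K_{3,3}$ back into the global-counter access pattern, by keeping dispatch-phase counter accesses in a region of the drawing disjoint from the execute-phase accesses. Membership in RE is immediate as before, and replacing each $\vb{Halt}$ instruction with the goal location $t$ makes the agent able to reach $t$ exactly when the machine halts, giving RE-completeness of planar reachability.
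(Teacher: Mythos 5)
Your PZ-chaining idea is appealing, but the protocol as stated breaks at exactly the step that matters most: conditional jumps. After the counter operation of a $\vb{JZ(}c\vb{,}i'\vb{)}$ instruction, the agent has taken either the counter's PZ tunnel (value zero) or its DecNZ tunnel (value nonzero), and must then be dispatched to $i'$ or to $i+1$ accordingly. But your return mechanism funnels both outcomes into the same place: the returning agent walks the single chained line, is blocked at the unique marked instruction gadget, and is forced through that gadget's \emph{one} DecNZ tunnel. Since an Inc-DecNZ-PZ gadget has only one DecNZ tunnel, the agent's position after unmarking is identical in both outcomes, so it can freely choose either dispatch edge and thereby execute illegal transitions such as ``decrement and jump.'' A non-halting machine could then have a reachable goal, killing the only-if direction of the reduction. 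This is precisely why the paper's (non-planar) counter-machine simulation introduces the Inc-DecNZ-DecNZ instruction gadget with \emph{two} decrement paths, one per branch outcome. The natural repair here --- two markable gadgets per instruction and two outcome-specific return lines --- creates new cheats: the agent can unmark one gadget and then take the other outcome's exit, leaving a stale mark behind, and nothing in your construction shows such cheats never lead to the goal.

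Second, your claim that the instruction--counter connections form ``a planar $K_{2,n}$-type pattern'' is not justified, and it is where the planarity actually fails. Each $\vb{JZ}$-type instruction needs connections to \emph{two distinct} tunnel entrances (DecNZ and PZ) of its counter, plus return edges; fans from interleaved positions along the instruction line to distinct locations on one counter's boundary force crossings (already three instructions needing both tunnels of the same counter yield a $K_{2,3}$ that would have to be drawn with all three instruction-side attachment points on a common face, which $K_{2,3}$ does not admit). You can de-interleave by grouping instructions by operation type along the line, but that ordering conflicts with the control-flow-respecting ordering your structured-programming step requires to keep the dispatch arcs nested. You explicitly defer this verification as ``the delicate part,'' but it is the crux of the theorem, not a routine check. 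The paper avoids the entire problem by never laying out a machine planarly: it cites Ani et al.\ to reduce crossover construction to a symmetric self-closing door, builds that door from the Inc and DecNZ tunnels of two gadgets for any planar arrangement of tunnels, and then invokes the already-established non-planar RE-hardness verbatim. To salvage your direct approach you would need, at minimum, a sound two-line return scheme with a no-cheating argument and a layout resolving the fan crossings; both are substantial missing pieces.
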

\begin{proof}
    Ani et al.~\cite[Theorems~3.1 and~3.2]{Doors_FUN2020}
    show that a crossover can be built from a
    \defn{symmetric self-closing door}:
    a gadget with two states $Q = \{1, 2\}$
    and two possible traversal paths $L_1 \to R_1$ and $L_2 \to R_2$,
    where $L_1 \to R_1$ is possible only in state $1$,
    $L_2 \to R_2$ is possible only in state $2$,
    and every traversal switches the state.
    \figref{inc-decnz-sim-sscd-planar}
    shows how to build a symmetric self-closing door from Inc-DecNZ
    (and thus Inc-DecNZ-PZ) in all cases.
\end{proof}

\begin{figure}
    \centering
    \includegraphics[scale=1.0]{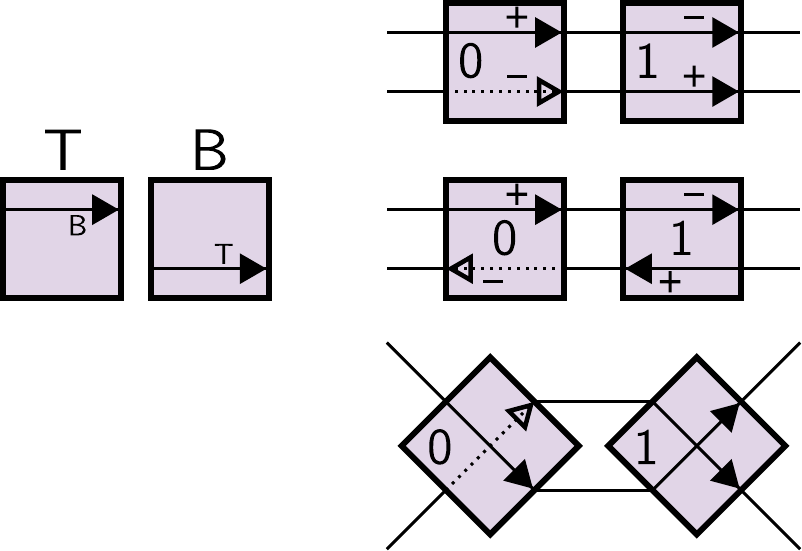}
    \caption{Left: State diagram of the symmetric self-closing door. Right: Simulation of a symmetric self-closing door, no matter the planar arrangement of tunnels in Inc-DecNZ.}
    \label{inc-decnz-sim-sscd-planar}
\end{figure}

Thus, we do not need to build a crossover when reducing from reachability with the Inc-DecNZ-PZ gadget, even in a planar application.

\subsection{Inc\texorpdfstring{$[a,b]$}{[a,b]}-DecNZ\texorpdfstring{$[c,d]$}{[c,d]}-PZ, \texorpdfstring{$a > 0$, $c > 0$}{a > 0, c > 0} is RE-complete}

The Inc$[a,b]$-DecNZ$[c,d]$-PZ gadget (\figref{incab-decnzcd-pz}) replaces the Inc and DecNZ tunnels with Inc$[a,b]$ and DecNZ$[c,d]$ tunnels, respectively. Recall that an Inc$[a,b]$ tunnel allows the player to choose an integer between $a$ and $b$ inclusive, and increment the gadget's state by that amount. A DecNZ$[c,d]$ tunnel allows the player to choose an integer between $c$ and $d$ inclusive and decrement the state by that integer, but only if the result is nonnegative.

\begin{figure}
    \centering
    \includegraphics[scale=1.0]{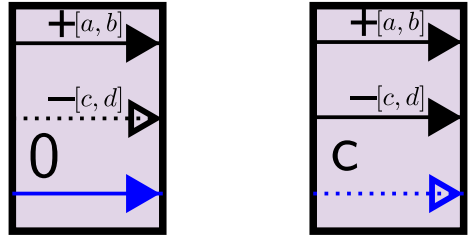}
    \caption{The Inc$[a,b]$-DecNZ$[c,d]$-PZ gadget, shown in states 0 and $c$.}
    \label{incab-decnzcd-pz}
\end{figure}

\begin{theorem}\label{thm:incab-decnzcd-pz}
    Reachability with Inc$[a,b]$-DecNZ$[c,d]$-PZ is RE-hard if $a > 0$ and $c > 0$.
\end{theorem}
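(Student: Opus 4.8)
The plan is to reduce from reachability with the unit Inc-DecNZ-PZ gadget, which is RE-complete by the preceding corollary, by simulating a single clean Inc-DecNZ-PZ gadget out of a small network of Inc$[a,b]$-DecNZ$[c,d]$-PZ gadgets. The one robust fact we may lean on is that, since $a,c\ge 1$, every increment adds at least one, and every decrement both requires and removes at least one; so each traversal makes monotone, bounded progress. The difficulty is that the \emph{agent}, not we, chooses the exact amounts within $[a,b]$ and $[c,d]$. I would therefore represent a clean counter value $n$ by the scaled state $s=Kn$ of a main ranged gadget, where $K$ is a fixed constant chosen so that it is simultaneously expressible as a sum of increments drawn from $[a,b]$ and as a sum of decrements drawn from $[c,d]$ (e.g. a large enough common multiple of $a$ and $c$). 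Honest play realizes a clean $+1$ by raising the state by exactly $K$ and a clean $-1$ by lowering it by exactly $K$, while the clean zero test is inherited directly from PZ, which fires exactly when $s=0$, i.e. $n=0$.

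The heart of the construction is a forcing sub-network that launders the agent's free choice of amounts into an exact change of $K$. The key tool is that \emph{emptiness is certifiable}: if an auxiliary gadget is drained and a subsequent PZ confirms it holds $0$, then the total removed from it equals its initial contents, independent of how the agent split that total into steps. I would preset auxiliary ``reservoir'' gadgets to $K$ and gate the only exit of each increment/decrement sub-network behind such a PZ checkpoint, so that the agent can leave only after the main state has moved by exactly $K$: any other split either fails an emptiness check or strands the agent holding a residual amount it can never clear, using that from a positive state PZ is reachable only via a DecNZ landing exactly on $0$. Control flow between simulated instructions is gated exactly as in the unit Inc-DecNZ-PZ reduction, so the agent can make progress only by following the intended program order, and the whole simulation adds only a bounded-size network per clean gadget, keeping the reduction polynomial.

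Finally I would verify both directions. Completeness is routine: if the counter machine halts, the honest scaled strategy lies within the ranges and reaches the goal. Soundness is where I expect the main obstacle to lie, namely ruling out that the amount-choosing freedom manufactures a spurious winning play. The harmless deviation is over-incrementing, which only inflates the state and can at worst \emph{block} a zero test (costing the agent a win, not granting one); the dangerous deviation is an over-decrement that drives the state to $0$ and thereby fakes a zero test while the honest value is still positive. I would control this with a domination invariant showing $s\ge Kn$ at all simulated checkpoints, with equality in honest play, and argue that the forcing network prevents the state from ever dropping below $Kn$ without stranding the agent on an uncleanable residue; hence every $s=0$ event is a genuine zero, and any premature reduction dead-ends. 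Establishing this invariant against all interleavings of the agent's amount choices is the delicate step, and getting the reservoir sizes and checkpoint gating right so that no cheating split survives is the crux of the proof.
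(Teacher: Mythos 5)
Your plan has a genuine gap, and it sits exactly where you predicted the crux would be---but the dangerous deviation is the one you dismissed as harmless. With a single main gadget, an over-increment can never be detected or punished: the only observations available on a counter are PZ (fires only at state exactly $0$) and DecNZ traversability (a lower bound on the state), and during a simulated increment the main gadget's state is positive, so an agent who adds $b$ instead of $a$ on some Inc$[a,b]$ traversal still completes every reservoir certification and every downstream check that honest play completes. Your reservoirs certify only totals removed from the \emph{reservoirs}; nothing couples those totals to the amounts the agent freely and independently chooses on the main gadget. Hence forcing an exact $+K$ change is impossible whenever $a<b$, and the best invariant you can maintain on one gadget is precisely the one-sided $s \ge Kn$ you fall back to. That invariant makes the zero test sound, but it leaves the \emph{nonzero} test unsound: after a few over-increments the slack exceeds $K$, so the agent can reach a configuration with $n=0$ but $s>0$, and then traverse the simulated DecNZ path from the zero state. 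In the counter-machine reduction this is fatal rather than merely suboptimal: faking the nonzero branch of a \texttt{JZ} can route the agent straight to \texttt{Halt} on a machine that loops forever. For instance, with $a=1,b=2,c=1,d=2$, on the program ``0: \texttt{Inc(c)}; 1: \texttt{Dec(c)}; 2: \texttt{JZ(c,4)}; 3: \texttt{Halt}; 4: loop'' the agent can add $2K$ at step 0, remove $K$ at step 1, and then pass the DecNZ-gated nonzero branch at step 2 with $s=K>0$ even though $n=0$, reaching \texttt{Halt} although the machine never halts. So over-incrementing does not just cost the agent a win; it grants one.

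The paper's proof never attempts exactness and instead uses \emph{two} coupled gadgets tracking opposite one-sided bounds. An ``edge duplicator'' sub-network lets each tunnel be replicated; $G_0$ carries $acd$ Inc$[a,b]$ and $abd$ DecNZ$[c,d]$ tunnels, $G_1$ carries $bcd$ and $abc$ of them, and each simulated operation traverses the corresponding bundles of \emph{both} gadgets in series. The multiplicities are chosen so that the maximum possible total increment of $G_0$'s bundle ($b\cdot acd$) equals the minimum possible total increment of $G_1$'s bundle ($a\cdot bcd$) $=abcd$, and symmetrically for decrements, yielding the invariant $\max(G_0)=\min(G_1)=abcd\,n$. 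Then the upper-bounded gadget $G_0$ gates the DecNZ path (crossing its bundle needs state at least $c\cdot abd=abcd$, hence $n>0$), while the lower-bounded gadget $G_1$ gates the PZ path (crossing it needs state exactly $0$, hence $n=0$). This two-sided ``sandwich'' is the idea your single-gadget, exact-forcing plan is missing, and without it (or something equivalent) the soundness direction of your reduction fails.
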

\begin{proof}
We simulate the Inc-DecNZ-PZ gadget. First, we build an \defn{edge duplicator} (\figref{incab-decnzcd-pz-edge-dup}), which allows us to duplicate Inc$[a,b]$ and DecNZ$[c,d]$ tunnels as many times as we want. In the edge duplicator shown, $e_0\to e_1$ is being duplicated. If the player enters via ${\rm In}_0$, they go from $f$ to $g$ $c$ times and go through $e_0$ to reach $e_1$. Then since the gadget on the right does not allow passage, they must go from $h$ to $i$ until the path to ${\rm Out}_0$ opens up (that is, $a$ times). Then they leave, and the left gadget is reset. This works symmetrically for ${\rm In}_1\to{\rm Out}_1$.

\begin{figure}
    \centering
    \includegraphics[scale=0.75]{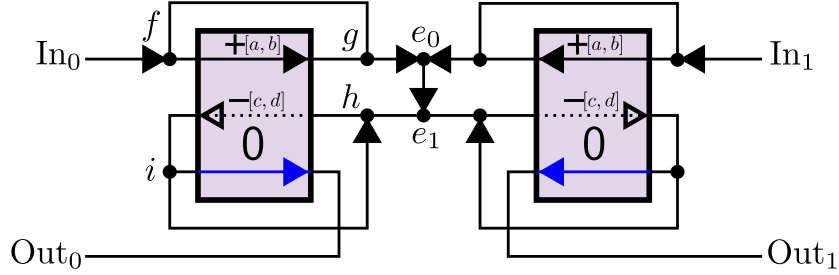}
    \caption{An edge duplicator built with the Inc$[a,b]$-DecNZ$[c,d]$-PZ gadget. The player can go from $e_0$ to $e_1$ along two different paths without leaking between them.}
    \label{incab-decnzcd-pz-edge-dup}
\end{figure}

Then we use two gadgets with as many Inc$[a,b]$ tunnels, as many DecNZ$[c,d]$ tunnels, and as many PZ lines as we want to simulate the Inc-DecNZ-PZ gadget (\figref{incab-decnzcd-pz-sim-inc-decnz-pz}). In fact, we use $acd$ Inc$[a,b]$ tunnels and $abd$ DecNZ$[c,d]$ tunnels in $G_0$, and $bcd$ Inc$[a,b]$ tunnels and $abc$ DecNZ$[c,d]$ tunnels in $G_1$.

Let $s(G)$ be the state of gadget $G$. We say that $G\sqsubset[i, j]$ if $s(G)$ is between $i$ and $j$, and it is possible that $s(G) = i$, and also possible that $s(G) = j$. For example, if $G$ starts at state 0, then $G\sqsubset[0, 0]$. If the player goes through the Inc$[a,b]$ tunnel, then $G\sqsubset[a, b]$. After going through the DecNZ$[c,d]$ tunnel, which is only possible if $b - c\ge 0$, $G\sqsubset[\max(a - d, 0), b - c]$. If $G\sqsubset[i, j]$, we define $\min(G) = i$ and $\max(G) = j$. 

We maintain the invariant that $\max(G_0) = \min(G_1) = abcdn$, where $n$ is state of the simulated Inc-DecNZ-PZ gadget.
\begin{itemize}
    \item If the player crosses the simulated Inc tunnel, then $\max(G_0) := \max(G_0) + b\cdot acd = abcd(n + 1)$ and $\min(G_1) := \min(G_1) + a\cdot bcd = abcd(n + 1)$.
    \item If the player successfully crosses the simulated DecNZ tunnel, then because of $G_0$, it was the case that $abcdn \ge c\cdot abd$, meaning that $n > 0$, which is what we want. Then $\max(G_0) := \max(G_0) - c\cdot abd = abcd(n - 1)$ and $\min(G_1) := \min(G_1) - d\cdot abc = abcd(n - 1)$. Note that since $\max(G_0) = \min(G_1)$, if $G_0$'s portion was crossable, then so is $G_1$'s portion.
    \item If the player successfully crosses the simulated PZ tunnel, then because of $G_1$, it was the case that $abcdn = 0$, meaning that $n = 0$. Then $\max(G_0) := 0$ and $\min(G_1) := 0$. If $G_1$'s portion was crossable, then so is $G_0$'s portion, because $\max(G_0) = \min(G_1)$.
\end{itemize}

So this simulation works.
\end{proof}


\begin{figure}
    \centering
    \includegraphics[scale=1.0]{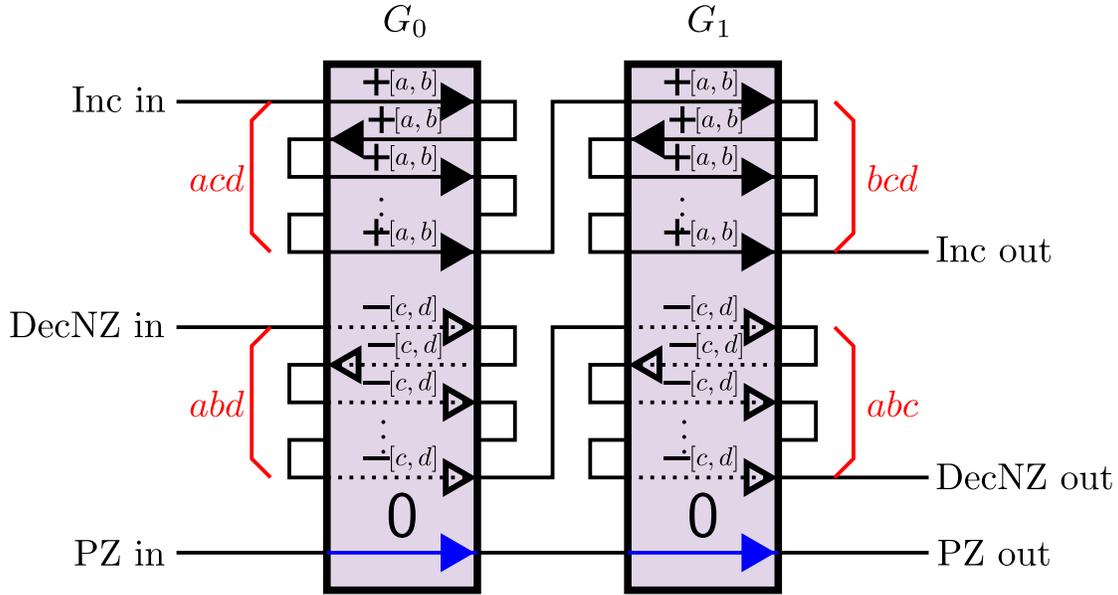}
    \caption{Simulation of Inc-DecNZ-PZ with Inc$[a,b]$-DecNZ$[c,d]$-PZ gadget. The numbers of copies of repeated tunnels are shown in red.}
    \label{incab-decnzcd-pz-sim-inc-decnz-pz}
\end{figure}

\subsection{Constant-Size Levels}
\label{sec:constant}

Universal Turing or counter machines let us strengthen our results to need
just a constant number of gadgets.
For example, Korec \cite{smallCounters} designs a 2-counter machine $U_{32}$
(consisting of 32 instructions over \texttt{Inc}, \texttt{Dec}, and \texttt{JZ})
that is \defn{strongly universal} in the sense that
there is a computable function $f$ such that, given any 2-counter machine $M$,
$M$ applied to $x$ and $y$ produces the same result as
$U_{32}$ applies to counter values $f(x)$ and $y$.
By applying the theorems above to $U_{32}$,
we obtain a system of a constant number of counter gadgets
that simulates $U_{32}$ and thus any 2-counter machine
and thus any Turing machine.
Crucially, this system must start with arbitrary specified initial states,
to represent $f(x)$ and $y$. This framework and its implications are described in more depth in \cite{baba}.

Applied to Mario, this means that we can prove RE-hardness of
constant-size levels, provided they can start with arbitrarily many
enemies at each location.

Alternatively, if we must start with all counters in state 0,
or Mario levels without any enemies, then we need a linear number of
instructions to build up the initial counter values
(by repeated addition and multiplication implemented by repeated addition).

\section{New Super Mario Bros.\ Series}
\label{sec:NSMB}

An earlier version of this work \cite{uuu}
showed that Generalized New Super Mario Bros.\ is undecidable by a reduction from reachability with the Inc-DecNZ-PZ gadget and a crossover.
In this section, we will prove undecidability for all games in the New Super Mario Bros.\ series by building a similar counter with the Inc$[1,2]$-DecNZ$[1,2]$-PZ gadget. Specifically, this gadget can be used in New Super Mario Bros., New Super Mario Bros.\ Wii, New Super Mario Bros.\ U, and New Super Mario Bros.\ 2.
We use the number of enemies called Goombas to keep track of the counter state, and since we can simulate the Inc-DecNZ-PZ gadget by Theorem \ref{incab-decnzcd-pz-sim-inc-decnz-pz} we can obtain undecidability in constant size as in Section \ref{sec:constant}. However, this result is not necessarily planar, so we provide a simple crossover gadget as well.

In this section, we make heavy use of a mechanic in the New Super Mario Bros.\ games called \defn{events}, which allows the player to interact with blocks via switches and event controllers.
The functionality of events that we will use can be summarized as follows, although exact implementations vary per game: 
\begin{itemize}
    \item Each event can be either on or off.
    \item Event controllers toggle one event based on the status of another.
    \item Location controllers toggle an event based on the status of enemies or players in a pre-defined location.
    \item There are blocks which appear or disappear according to the status of an event.
\end{itemize}

\begin{theorem}\label{NSMBUndecidable}
    The New Super Mario Bros.\ games are RE-complete.
\end{theorem}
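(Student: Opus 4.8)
The plan is to prove RE-completeness by establishing RE-hardness through a reduction from reachability with the Inc$[1,2]$-DecNZ$[1,2]$-PZ gadget, which is RE-hard by Theorem~\ref{thm:incab-decnzcd-pz} (the case $a=c=1$, $b=d=2$); membership in RE was already argued in the introduction, so RE-completeness follows immediately. The reduction is really a construction: I would build, entirely within the New Super Mario Bros.\ game engine, a level fragment whose input/output behavior matches the Inc$[1,2]$-DecNZ$[1,2]$-PZ gadget, representing the counter's state by the number of Goomba enemies resident at a designated location. I would then translate an arbitrary system of such gadgets, together with its connection graph, into a single level by tiling copies of this fragment and joining their locations by freely-traversable corridors, so that Mario can reach the goal if and only if the corresponding reachability instance is solvable.

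The heart of the work is the counter gadget itself, which I would realize using the \defn{events} mechanic. For the Inc$[1,2]$ tunnel, Mario's passage through a corridor trips a location controller that toggles an event, making blocks appear or disappear so that one or two Goombas are released into the counting chamber; the tolerance $[1,2]$ rather than an exact increment is precisely the slack the gadget definition allows, which is why I target Inc$[1,2]$-DecNZ$[1,2]$-PZ rather than the exact-increment gadget. For the DecNZ$[1,2]$ tunnel, crossing would remove one or two Goombas from the chamber (for instance by stomping or by funneling them into a pit), while a location controller gates the corridor shut unless at least one Goomba is present, enforcing the nonzero precondition. For the PZ tunnel, a location controller detects emptiness and opens a passage only when the chamber holds no Goomba. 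With the single gadget in hand, I would invoke Theorem~\ref{thm:incab-decnzcd-pz} to conclude that reachability with these in-game gadgets simulates reachability with Inc-DecNZ-PZ and is therefore RE-hard, and by the constant-size argument of Section~\ref{sec:constant} the hardness survives for constant-size levels, provided a level may begin with arbitrarily many enemies.

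Because the resulting system need not be planar, I would also build a small standalone crossover gadget so that connection corridors can cross without leaking Mario between them; the planarity shortcut of Theorem~\ref{planar} is unavailable here, since a fixed 2D level genuinely forces some corridors to cross. Finally, I would confirm that all four titles (New Super Mario Bros., Wii, U, and 2) expose the event controllers, location controllers, and Goomba-spawning blocks the construction relies on, so that a single construction covers the whole series, and I would observe (as noted in the introduction) that the Wii version admits a realization fitting within one standard screen.

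The step I expect to be the main obstacle is forcing the three tunnels to behave exactly as specified against the engine's physics: ensuring that Goombas neither wander out of the counting chamber nor despawn offscreen nor pile up in a way that corrupts the count, and that the location controllers fire on precisely the intended emptiness or nonemptiness condition. Equally delicate is pinning Mario to honest, complete traversals, with no partial crossing that leaves the gadget in an inconsistent state and no way to skip a decrement or fake a zero test. This difficulty is exactly what motivated the robust Inc$[1,2]$-DecNZ$[1,2]$-PZ formulation: tolerating an uncertain but bounded number of enemy changes, rather than demanding exactly one, is far easier to enforce with the game's timing and alignment than an exact count would be.
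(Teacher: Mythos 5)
Your proposal follows essentially the same route as the paper's proof: a reduction from reachability with the Inc$[1,2]$-DecNZ$[1,2]$-PZ gadget (Theorem~\ref{thm:incab-decnzcd-pz}), implemented via the events mechanic with Goombas at a designated location as the counter value, location/event controllers enforcing the NZ and PZ conditions, a separate event-based crossover gadget to handle non-planarity, and the constant-size strengthening from Section~\ref{sec:constant}. The paper's contribution beyond this outline is the concrete event wiring and block layout that resolves exactly the engineering obstacles you flag at the end, so your plan matches both the strategy and its acknowledged difficulties.
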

\begin{proof}

Our counter gadget is composed of two parts: the counter, pictured in Figure~\ref{nsmbc}, and the tunnel, pictured in Figure~\ref{nsmbt}. Each purple area represents a defined location. The groups of blocks associated with a letter are controlled by the event with the given letter\footnote{In memory, events are numbered, not lettered, but we use letters to disambiguate with locations}. When the event with the corresponding letter changes state, the blocks also change state between visible and invisible (where invisible blocks are also intangible). Invisible block are depicted as outlined instead of filled. There are also several invisible control objects:
\begin{itemize}
\item An enemy spawner spawns Goombas \ICON{nsmb/goomba.png}\ from the pipe \ICON{nsmb/pipe.png}.
    \item A location controller activates event \textsf{A} when a player enters location \textsf{1}.
    \item An event controller activates event \textsf{B} when event \textsf{A} is activated.
    \item A location controller deactivates event \textsf{A} when a player enters location \textsf{2}.
    \item A location controller activates event \textsf{D} when a player enters location \textsf{3}.
    \item An event controller activates event \textsf{E} when event \textsf{D} is activated.
    \item A location controller deactivates event \textsf{D} when a player enters location \textsf{4}.
    \item A location controller activates event \textsf{C} when an enemy enters location \textsf{5}.
    \item An event controller deactivates event \textsf{B} when event \textsf{C} is activated.
    \item A location controller activates event \textsf{F} when an enemy enters location \textsf{7}.
    \item An event controller deactivates event \textsf{E} when event \textsf{F} is activated.
    \item An event controller activates event \textsf{7} if and only if an enemy is in location \textsf{6}.
\end{itemize}

\textbf{Inc$[1,2]$.}\quad When the player enters the Inc tunnel, they encounter location \textsf{1} which enables event \textsf{A}, preventing backtracking and opening the path through the tunnel. They then enter location \textsf{2}, which reverses this change, preventing backtracking. At the same time, event \textsf{A} triggers event \textsf{B} which allows a Goomba \ICON{nsmb/goomba.png}\ to pass through the blocks tied to event \textsf{B}. That same Goomba \ICON{nsmb/goomba.png}\ enters location \textsf{5} which indirectly deactivates event \textsf{B}. Because of the way Goombas \ICON{nsmb/goomba.png}\ spread out when moving, only one or two Goombas \ICON{nsmb/goomba.png}\ will pass through during this time. The incremented Goomba(s) \ICON{nsmb/goomba.png}\ end up in location \textsf{6}.

\textbf{DecNZ$[1,2]$.}\quad The DecNZ tunnel works analogously to the Inc tunnel, instead allowing one Goomba \ICON{nsmb/goomba.png}\ to pass through the blocks tied to event \textsf{E}. In addition, if a Goomba \ICON{nsmb/goomba.png}\ is not in location \textsf{6}, i.e. the counter value is zero, event \textsf{G} will be disabled, and blocks tied to \textsf{G} will block traversal through the decrement tunnel, enforcing the NZ condition.

\textbf{PZ.}\quad Similarly to how the NZ condition is enforced, if a Goomba \ICON{nsmb/goomba.png}\ is in location \textsf{6}, event \textsf{G} is active and blocks tied to \textsf{G} block the path through PZ.

\begin{figure}
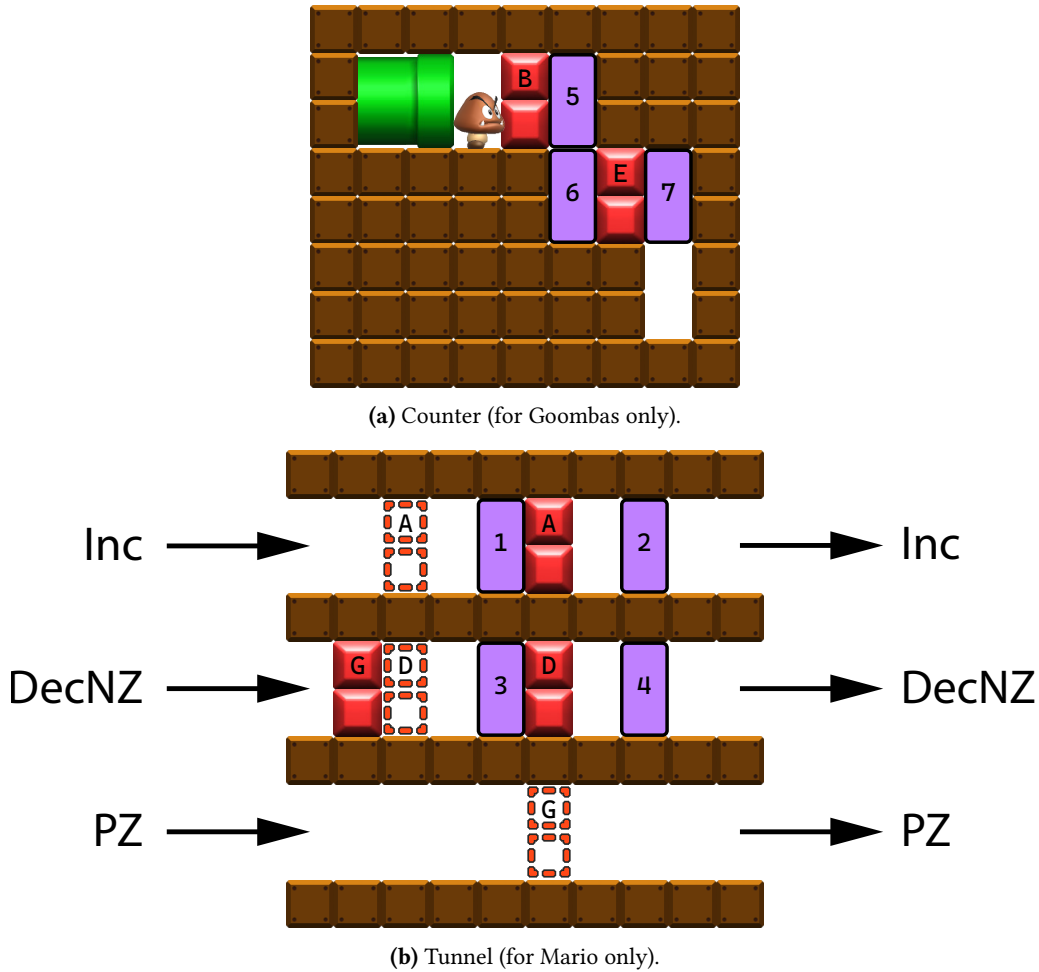

\centering
\begin{subfigure}[t]{\linewidth}
    \centering
    \includegraphics[scale=.4]{figures/new universal/counter.pdf}
    \caption{Counter (for Goombas only).}
    \label{nsmbc}
\end{subfigure}

\medskip

\begin{subfigure}[t]{\linewidth}
    \centering
    \includegraphics[scale=.4]{figures/new universal/traverse.pdf}
    \caption{Tunnel (for Mario only).}
    \label{nsmbt}
\end{subfigure}
\caption{The New Super Mario Bros.\ counter gadget in state 0.}
\end{figure}

To complete the proof, we provide a crossover gadget, as pictured in Figure \ref{nsmbCrossover}. This is another event-based gadget, where all blocks are controlled by one event, which is off as pictured. Entering either location labeled \textsf{1} will enable the event, toggling the states of all blocks and allowing traversal only between the top left and bottom right, and entering either location labeled \textsf{2} will disable the event, returning it to the state pictured and allowing traversal between the top right and bottom left.
\end{proof}

\begin{figure}
    \centering
    \includegraphics[scale=0.4]{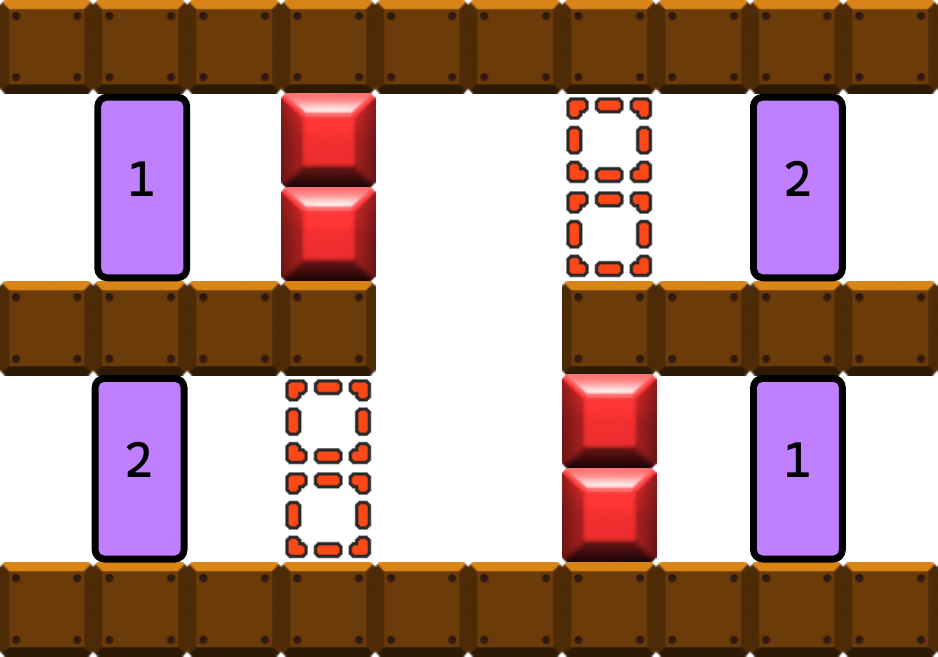}
    \caption{The New Super Mario Bros.\ crossover gadget.}
    \label{nsmbCrossover}
\end{figure}

\subsection{Constant-Size New Super Mario Bros.\ Wii}

As described in Section~\ref{sec:constant},
the above result implies undecidability for a level of constant size.
In fact, we have explicitly built a universal counter machine in
New Super Mario Bros.\ Wii within a \emph{single screen}.
Figure \ref{constantCounterNSMBW} depicts such a counter,
as shown in the \textit{Reggie} level editor.
Our level file is available to download and play \cite{github}.

\begin{figure}
    \centering
    \includegraphics[width=\linewidth]{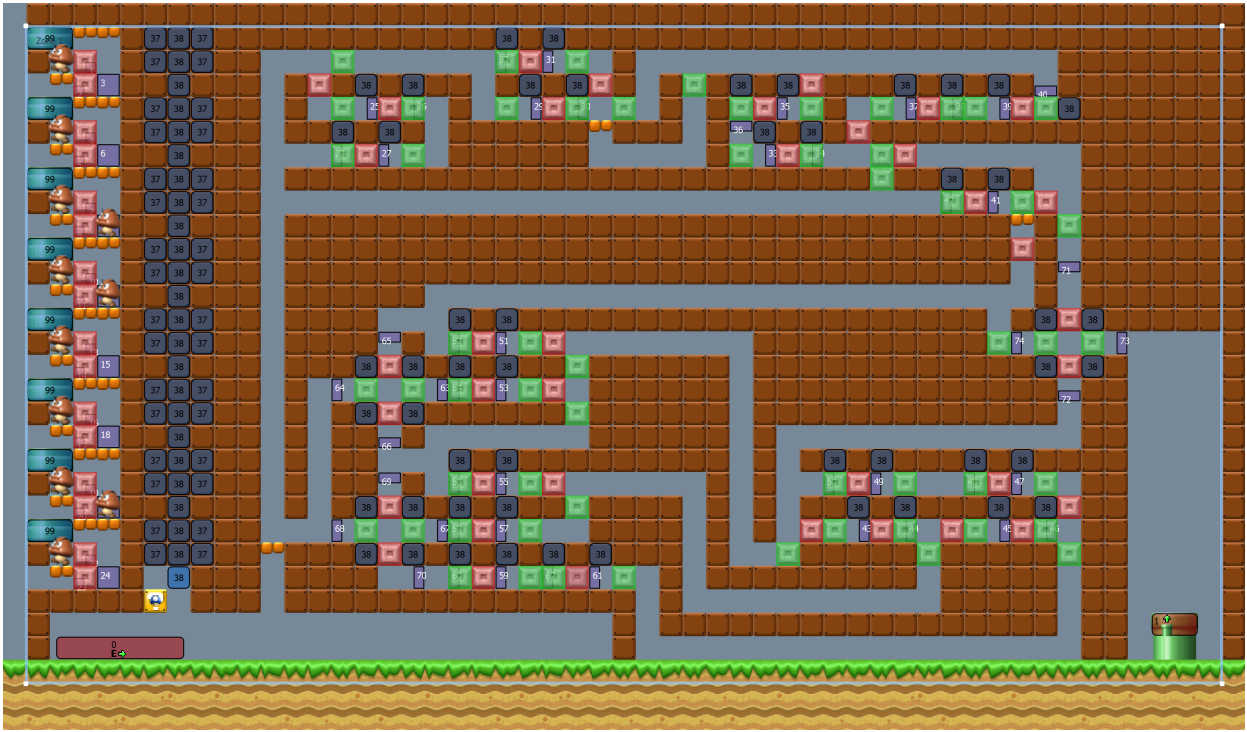}
    \caption{$U_{32}$ in New Super Mario Bros.\ Wii.}
    \label{constantCounterNSMBW}
\end{figure}

Specifically, this level builds the strongly universal counter machine $U_{32}$
of Korec \cite{smallCounters}, following an approach taken for
the video game \emph{Baba Is You} \cite{baba}.
The construction of this level is somewhat different from the reductions
we have described, making many simplifications to fit the level on one screen.
The left side of the level is devoted to the 8 registers, while the majority of the level is devoted to traversal tunnels which control the states of the gadget. Effectively, this is the same as the two components of our gadget featured in Theorem \ref{NSMBUndecidable}, but with extra space removed. Branches are achieved easily by having an event controller to check whether a Goomba is in the counter location. As pictured, the gadget has values of $0,0,1,1,0,0,1,0$ for registers $0,1,2,3,4,5,6,7$ respectively. The goal of the level is to traverse from the starting area (in the bottom left) to the pipe (bottom right) which leads to the flagpole. A mini-mushroom is provided in the starting area to help the player traverse more easily through tight tunnels, but is not essential to the functioning of the gadget.

Because this entire counter fits on one screen, it is unnecessary
to generalize level size as we have with other games.
However, we still need to generalize the game in a few ways.
Specifically, the actual game features a timer,
and the actual game engine will spawn a Goomba from each pipe only if
there are not already eight Goombas on screen.
Both of these limitations need to be removed to make the level fully work.

Because the functionality of events is effectively the same across the
New Super Mario Bros.\ series, this same counter should be able to be built
in the other games, although smaller Nintendo DS screen sizes
seem to prevent making such a large screen size in the DS games.

\section{Super Mario Maker}
\label{sec:SMM}

First we describe how the screen size
(which we do not generalize from the implemented size)
affects local vs.\ global behaviors in Super Mario Maker 1 and 2.
As in most Mario games, memory and effects are typically limited to
the extent of the screen --- technically, the \defn{relevant screen} which
extends four blocks beyond the visible screen.
For example, activating a P-switch temporarily turns coins into blocks,
but only within the current relevant screen; as Mario and this screen moves,
which coins are transformed changes, which can impact nearby enemies etc.
Similarly, enemies typically \defn{spawn} when the relevant screen first
overlaps their start location, and then \defn{despawn} when they leave that
relevant screen.

But the Super Mario Maker game engines also define a notion of
\defn{global} objects \cite{global-loading}
whose state is remembered even when it outside the relevant screen.
Only a handful of objects are inherently global; for example,
One-ways spawn at the beginning of the level and never despawn,
while Yoshi spawns when reaching the screen but then never despawns.
Crucially, an object becomes global if it is on Tracks,
or is on top of another global object;
this principle is called \defn{global ground}.
We use this property to make global any objects we want to be remembered,
such as the enemies representing the state of a counter.


The Super Mario Maker games are also unusual in that they
allow level creation in multiple (4--5) game styles,
which each offer some slightly unique mechanics.
Our constructions that apply to four styles make use of only mechanics
that are present in all four styles,
and all four of the styles have the same physics.

\subsection{Super Mario Maker 1}
\label{sec:smm1}

\begin{theorem}
    All four game styles (Super Mario Bros.\ 1, Super Mario Bros.\ 3, Super Mario World, and New Super Mario Bros.\ U) of Super Mario Maker 1 are RE-complete.
\end{theorem}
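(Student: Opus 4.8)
The plan is to build, in the Super Mario Maker 1 engine, a gadget realizing the Inc-DecNZ-PZ counter gadget, invoking the framework from Section~\ref{sec:Counter Gadgets}. By the Corollary establishing RE-completeness of reachability with Inc-DecNZ-PZ, and by Theorem~\ref{planar} eliminating the need for a crossover (the construction can be made planar by simulating a symmetric self-closing door), it suffices to exhibit a single such gadget together with the connecting ``tunnels'' Mario traverses. Membership in RE is immediate by the general argument given in the introduction (simulate all finite input sequences), so only RE-hardness requires work. Because the paper insists on constant-size levels, I would represent each counter's value by the number of enemies (Goombas or an analogous enemy) stored at a fixed location, exactly as in the New Super Mario Bros.\ construction of Theorem~\ref{NSMBUndecidable}, and lean on Section~\ref{sec:constant} to reduce to the strongly universal machine $U_{32}$, yielding constant size.

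\textbf{First} I would address the central engine-specific obstacle, which is persistence of the enemy count across screens: unlike New Super Mario Bros., Super Mario Maker has no ``events'' mechanic, and enemies despawn once they leave the relevant screen. The key tool, as flagged in the section preamble, is \emph{global ground}: an enemy standing on Tracks or atop another global object is remembered even offscreen. So the counter-storage region must be built so that the stored enemies are global (e.g.\ resting on a One-way or a Track-mounted platform), ensuring the counter value survives while Mario walks the tunnels elsewhere in the level. I would make the counter location permanently global and verify that increments and decrements preserve globality of every stored enemy.

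\textbf{Next} I would implement the three traversal paths. For \textbf{Inc}, Mario's passage through a tunnel must cause exactly one enemy to be added to the storage location --- most naturally by triggering a spawner (a pipe or a block that releases one enemy) whose output is funneled onto the global ground. The delicate point is forcing \emph{exactly} one spawn per traversal; if the timing only guarantees ``at least one and at most a constant,'' I would instead target the Inc$[a,b]$-DecNZ$[c,d]$-PZ gadget and invoke Theorem~\ref{thm:incab-decnzcd-pz}, which absorbs exactly this imprecision. For \textbf{DecNZ}, Mario's passage must remove one stored enemy and must be blocked when the count is zero; I would route one enemy out of the storage region (e.g.\ off the global ground so it despawns or is destroyed) and gate the exit so that the removal path only opens when an enemy is present to interact with. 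For \textbf{PZ}, the tunnel must be traversable only when the storage region is empty --- dually detecting the presence of an enemy and using that presence to block (or, in the empty case, to open) Mario's path. Since all four styles (Super Mario Bros.\ 1, Super Mario Bros.\ 3, Super Mario World, New Super Mario Bros.\ U) share the same physics and I restrict to mechanics common to all four, a single construction covers all styles simultaneously.

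\textbf{The hard part} will be the presence/absence detection used by both DecNZ and PZ: arranging a physical interaction between a potentially-offscreen stored enemy and a mechanism that conditionally opens or closes Mario's tunnel, while guaranteeing no ``leakage'' (Mario cannot sneak through when he shouldn't, and enemies cannot escape the storage region on their own). Getting one enemy, and only one, to cross a threshold per traversal --- and ensuring the gadget fully resets so repeated traversals behave identically --- is exactly the kind of careful-timing issue the paper anticipates; this is why I expect the robust Inc$[a,b]$-DecNZ$[c,d]$-PZ formulation to be the safety net that makes the construction go through even when precise counts are hard to force.
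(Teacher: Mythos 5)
Your high-level plan coincides with the paper's proof: reduce from planar reachability with the Inc-DecNZ-PZ gadget (Theorem~\ref{planar}), represent the counter value by the number of enemies resting on track-mounted global ground so they survive offscreen, and build three traversal paths. The genuine gap is that the actual constructions---which are the entire substance of this theorem---are left as specifications rather than mechanisms. For a hardness result about a specific game engine, ``Mario's passage must cause exactly one enemy to be added'' and ``gate the exit so the removal path only opens when an enemy is present'' merely restate the definition of the gadget; they do not implement it. The paper's proof consists precisely of those implementations: a P-switch that turns a coin floor into bricks so that exactly one pipe-spawned Goomba is caught in a one-Goomba-wide chamber (exactly one, because Goomba pushing against a one-way admits a single occupant) and is released onto the counter platform when the timer expires, with one-ways and a spike row forcing Mario to hit the switch exactly once and to wait out the full timer so the Goomba stays loaded until it reaches global ground; a trampoline, bounced by hitting a brick block from below, that ejects one chambered Goomba onto an upper platform and then onto a spike row that Mario can clear only by bouncing off that falling Goomba (this \emph{is} the nonzero check); and a PZ path in which Mario must jump up through the semisolid floor of the single-Goomba chamber, taking fatal damage exactly when a Goomba is present.

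Moreover, the safety net you invoke does not cover the part you yourself flag as hardest. Theorem~\ref{thm:incab-decnzcd-pz} absorbs imprecision only in the increment and decrement \emph{amounts} (the ranges $[a,b]$ and $[c,d]$ with $a,c>0$); it still demands an exact PZ tunnel and an exact nonzero gate on the decrement path. So retreating to Inc$[a,b]$-DecNZ$[c,d]$-PZ cannot rescue an imprecise presence/absence detector---that detection must be exact in either formulation, and your proposal supplies no mechanism for it. (As it happens, the paper does not need the robust gadget for Super Mario Maker 1 at all: the single-Goomba-chamber constructions force increments and decrements of exactly one, and the robust variant is used only for the New Super Mario Bros.\ series.)
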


\begin{proof}
    We reduce from planar reachability with the Inc-DecNZ-PZ gadget
    (Theorem~\ref{planar}).  Our gadget uses the following elements:

    \begin{itemize}
        \item \textbf{Solid ground:} \ICON{smm/ground.png}\ \ICON{smm/block.png}\ Standard blocks with no special effects.
        \item \textbf{Semisolid platforms:} \ICON{smm/semisolid.png}\ \ICON{smm/platform.png}\ \ICON{smm/mushroom.png}\ Platforms which can be jumped through from below but are solid from above.
        \item \textbf{One-ways:} \ICON{smm/one-way.png}\ Walls which allow entities (Mario, enemies, etc.)\ to pass the white bar only in the direction of the arrows.
        \item \textbf{Brick block:} \ICON{smm/brick.png}\ Solid block which can be hit from below to defeat enemies or bounce trampolines above it,
        \item \textbf{Coin:} \ICON{smm/coin.png}\ Transparent course element which allows Mario and enemies to pass through freely. If Mario touches a coin, it is collected and disappears from the course.
        \item \textbf{P-switches:} \ICON{smm/p-switch.png}\ Switches which turn coins into brick blocks and vice versa for the duration of a timer.
        \item \textbf{Tracks:} \ICON{smm/track.png}\ Rails specifying periodic movement of attached entities; platforms \ICON{smm/platform}\ on tracks are global ground, meaning that they and entities on them do not despawn when offscreen.
        \item \textbf{Spikes:} \ICON{smm/spike.png}\ Blocks which damage Mario upon contact.
        \item \textbf{Goombas:} \ICON{smm/goomba.png}\ Enemies which damage Mario when he contacts from any direction but above. Can safely walk on spikes and are defeated by being jumped on, giving Mario a vertical boost similar to a jump.
        We use big ($2 \times 2$) Goombas in this construction.
        \item \textbf{Trampoline:} \ICON{smm/spring.png}\ Item which bounces entities on it up into the air.
        \item \textbf{Pipes:} \ICON{smm/pipe}\ Elements which periodically spawn a particular item or enemy (drawn next to the pipe) into the course. If the pipe spawns items, it will only do so if the last element that the pipe spawned is no longer loaded or exists. Enemies such as Goombas \ICON{smm/goomba}\ spawn indefinitely. Their spawn frequency can be adjusted to one of four speeds.
    \end{itemize}

    The key idea in this gadget is similar to other Mario Inc-DecNZ-PZ reductions: use the number of Goombas \ICON{smm/goomba}\ in a particular location to represent the value of a counter. The gadget has infinitely many enumerated states such that, for any integer $n\geq 0$, there exists a state with a collection of $n$ Goombas. The particular construction of the gadget determines how to increase, decrease, and check for zero in the gadget.

    The counter element of the gadget is a semisolid platform \ICON{smm/platform}\ on a track \ICON{smm/track}. The track makes the platform global ground, preventing the Goombas \ICON{smm/goomba}\ from despawning as Mario moves away from the gadget.

    Another critical element to the gadget construction is Goomba pushing. If two Goombas \ICON{smm/goomba}\ are on the same $y$ level and walk into each other, they both bounce off and move in the opposite direction. This property is useful for building single-Goomba chambers: if there is a one-way \ICON{smm/one-way}\ into a chamber with the width of a single Goomba that can only be accessed from the side, at most one Goomba can occupy it. This is because, if any other Goombas attempt to walk into the space, they will bounce off and be unable to pass through the one-way.

    \begin{figure}[t]
        \centering
        \includegraphics[width=1\linewidth]{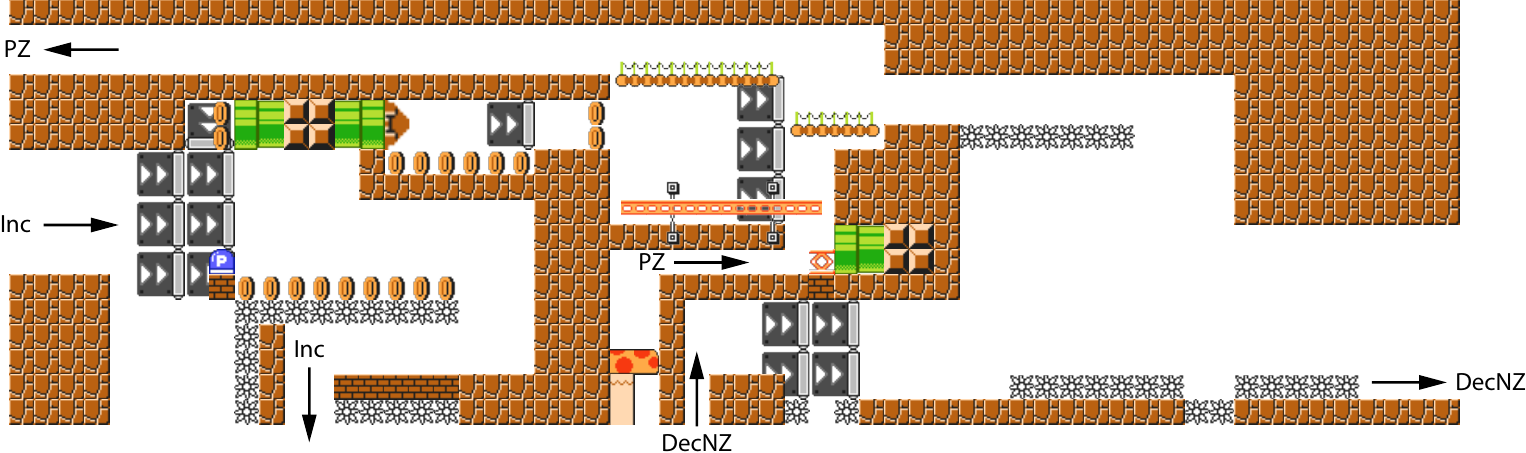}
        \caption{The counter gadget for Super Mario Maker.}
        \label{fig:smm}
    \end{figure}

\paragraph{Inc.}

    On traversal of the increment path, the player is forced to add exactly one Goomba \ICON{smm/goomba}\ to the gadget counter. Above the semisolid platform \ICON{smm/platform}\ counter, there is a long horizontal chamber with coins \ICON{smm/coin}\ for a floor. At one end of the corridor, a pipe \ICON{smm/pipe}\ spawns Goombas which immediately fall through the coin floor. At the other end of the pipe, a one-way \ICON{smm/one-way}\ leads to a space wide enough for exactly one Goomba bounded by coins: a single-Goomba chamber as described above. This chamber is level with the pipe and has a solid floor, unlike the coin floor of the corridor.

    On activation of a P-switch \ICON{smm/p-switch}, the coins \ICON{smm/coin}\ turn to brick blocks \ICON{smm/brick}. Any Goombas \ICON{smm/goomba}\ already in the gadget are defeated instantly, and future spawned Goombas can freely walk towards the chamber. The coins bordering the chamber on the other side will be brick blocks, trapping exactly one Goomba in the chamber. Once the P-switch timer expires, the floor will transform back into coins and all the Goombas currently in the corridor will be too low to enter the chamber. The brick blocks at the end of the chamber will turn back into coins, allowing the Goomba to pass through and fall onto the semisolid platforms \ICON{smm/platform}\ to increment the counter. As long as the corridor length and pipe spawn frequency guarantee that at least one Goomba reaches the chamber by the end of a P-switch cycle, every P-switch activation guarantees a counter increment by exactly one.

    An Inc traversal path can be built such that Mario is forced to hit a P-switch exactly once. We accomplish this by spawning P-switches \ICON{smm/p-switch}\ from a pipe \ICON{smm/pipe}\ in the ceiling, which fall onto a brick block \ICON{smm/brick}\ behind a one-way \ICON{smm/one-way}. Beyond the brick block, a row of coins \ICON{smm/coin}\ above a row of spikes \ICON{smm/spike}\ leads to solid ground \ICON{smm/ground}. After the solid ground, another row of spikes \ICON{smm/spike}\ below a row of brick blocks \ICON{smm/brick}\ lead out to the exit port. The gadget is constructed such that Mario must jump through a set of one-ways \ICON{smm/one-way}: one before hitting the P-switch \ICON{smm/p-switch}\ and another immediately after.
    
    To traverse, Mario must jump past the one-way \ICON{smm/one-way}\ and land on the P-switch \ICON{smm/p-switch}, then run across the transformed brick blocks \ICON{smm/brick}\ over the spikes \ICON{smm/spike}\ and onto the solid ground \ICON{smm/ground}. Once through the one-way, Mario cannot go back and activate another P-switch. Since the exit port is blocked by a spike row, Mario is forced to wait out the full P-switch timer, guaranteeing that the Goomba \ICON{smm/goomba}\ remains loaded until it can fall onto the counter and reach global ground.

\paragraph{DecNZ.}

    At the end of the semisolid counter platform \ICON{smm/platform}\ farthest from the chamber described above, another Goomba chamber is formed by a one-way \ICON{smm/one-way}\ and solid ground \ICON{smm/ground}\ tiles. This guarantees that, in any state $n>0$, exactly one Goomba \ICON{smm/goomba}\ is in the chambered section of the counter. Below the chamber, a pipe \ICON{smm/pipe}\ spawns trampolines \ICON{smm/spring}\ onto a brick block \ICON{smm/brick}\ which is accessible from below. Directly above the chamber, a semisolid platform \ICON{smm/semisolid}\ leads toward a long fall. The platform is low enough such that a Goomba bounced up from the counter by a trampoline would land on the upper semisolid platform \ICON{smm/semisolid}, no longer in the counter. The upper semisolid platform is bounded on one side by one-ways \ICON{smm/one-way}, and on the other it has a long fall onto a row of spikes \ICON{smm/spike}.

    The decrement works by having small Mario hit the brick block \ICON{smm/brick}\ from below, bouncing the trampoline \ICON{smm/spring}\ upwards and allowing it to bounce a Goomba \ICON{smm/goomba}\ up into the air onto the upper semisolid platform \ICON{smm/semisolid}. Mario must then traverse the row of spikes \ICON{smm/spike}, which is too long to clear in a single jump. If the counter had more than one Goomba and the bounce succeeded, a Goomba will fall onto the spikes, allowing Mario to jump off it to gain extra height and clear the row, exiting the gadget. If the counter was at zero, Mario would be unable to clear the spikes and lose a life instead. This checks that the decrement removes at least one Goomba.

    We enforce the condition that Mario can only decrement once per traversal by using one-ways. Mario is forced to jump through a one-way \ICON{smm/one-way}, hit the brick block \ICON{smm/brick}, then fall through another one-way \ICON{smm/one-way}. Using this construction, Mario can only hit the brick block once, making a tight lower bound and guaranteeing decrement of exactly one enemy.

\paragraph{PZ.}

    The traverse path requires that Mario jumps through the semisolid counter platform \ICON{smm/semisolid}\ from below, in particular through the bottom of the single-Goomba chamber. If the gadget is in any nonzero state, there is guaranteed to be one Goomba \ICON{smm/goomba}\ in the chamber. Mario will take damage when making contact from below and lose a life. If the gadget is in a zero state instead, the Goomba chamber will be empty and Mario will be free to traverse.
%
%
\end{proof}

\subsection{Super Mario Maker 2 (Normal Styles)}

\begin{theorem}
    All four normal game styles (Super Mario Bros.\ 1, Super Mario Bros.\ 3, Super Mario World, and New Super Mario Bros.\ U) of Super Mario Maker 2 are RE-complete.
\end{theorem}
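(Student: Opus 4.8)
The plan is to reuse essentially verbatim the Super Mario Maker~1 construction from the previous theorem, since the four normal game styles of Super Mario Maker~2 share the same physics engine and the same repertoire of relevant mechanics. I would again reduce from planar reachability with the Inc-DecNZ-PZ gadget (Theorem~\ref{planar}), rebuilding the identical counter gadget of Figure~\ref{fig:smm} and representing the counter value by the number of big Goombas resting on a semisolid platform held on a track (hence global ground).

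First I would confirm that every element invoked in the Super Mario Maker~1 proof---solid ground, semisolid platforms, one-ways, brick blocks, coins, P-switches, tracks, spikes, big Goombas, trampolines, and the enemy/item-spawning pipes---is available in all four normal styles of Super Mario Maker~2. Next I would verify that the behaviors the construction depends on are unchanged: the global-ground mechanism (a platform on a track keeps itself and the Goombas on it loaded offscreen), Goomba pushing (two Goombas at the same height reverse direction on collision, enabling single-Goomba chambers), the P-switch coin/brick transformation together with its timer, the pipe spawn rule (items respawn only once the previous one is unloaded while enemies spawn indefinitely), and the damage/bounce interactions among Mario, Goombas, spikes, and trampolines that force the Inc, DecNZ, and PZ paths to perform exactly one counter operation each.

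The main obstacle will be auditing these behaviors for subtle discrepancies introduced between the two titles, since enemy AI, spawn timing, and collision handling can be quietly retuned across versions. Where a behavior differs, I would record the difference and argue that the affected component can be re-parameterized---for instance by adjusting corridor lengths or pipe spawn frequencies---so that each traversal path still enforces its intended increment, nonzero-decrement, or zero-test. Assuming no mechanic is outright removed or fundamentally altered (which a careful check should confirm), the Inc, DecNZ, and PZ analyses of the previous proof carry over unchanged, and the counter gadget faithfully tracks the simulated counter, yielding RE-hardness; membership in RE is immediate as in the introduction, giving RE-completeness for all four normal styles.
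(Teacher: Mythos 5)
There is a genuine gap, and it is exactly the issue you flag and then wave away: the ``careful check'' you defer to does \emph{not} confirm that no mechanic is fundamentally altered. Super Mario Maker~2 changes item interaction so that Mario can pick up items from behind one-ways, and the Super Mario Maker~1 gadget relies critically on one-ways to isolate Mario from carryable items: the Inc path forces Mario to land on a pipe-spawned P-switch only after passing a one-way (so he cannot retrigger or retrieve it), and the DecNZ path places a pipe-spawned trampoline on a brick block behind a one-way so that Mario can bounce it exactly once. If Mario can grab the P-switch or the trampoline through a one-way, he can carry it elsewhere, activate it at a time of his choosing, or reuse it, destroying the ``exactly one operation per traversal'' guarantees in some of the game styles. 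This failure cannot be repaired by the only fix your proposal offers---re-tuning corridor lengths and pipe spawn frequencies---because it is a qualitative change in what Mario is permitted to do, not a timing discrepancy.

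Accordingly, the paper does not port the SMM1 gadget at all. It keeps your outer framing (reduction from planar reachability with the Inc-DecNZ-PZ gadget, Theorem~\ref{planar}, with the counter value stored as Goombas on global ground), but builds an entirely new gadget from mechanics exclusive to Super Mario Maker~2: on/off switches and on/off blocks, P-blocks, a blaster whose shell serves as the timing signal that limits the increment to exactly one spawned Goomba, a Thwomp--seesaw--clown-car assembly that extracts exactly one Goomba from the counter for the decrement, and donut-block drops that force Mario to keep a spawned Goomba loaded until it reaches global ground. Since the gadget-level construction is the actual content of the proof, your proposal as written would not go through.
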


The reduction for Super Mario Maker 1 from Section~\ref{sec:smm1} might be adaptable to Super Mario Maker 2, but differing mechanics in the latter game (which allows picking up items from behind one-ways) would mean that the gadget would have to be altered for some of the game styles in order to prevent breaking the gadget. Thus, we demonstrate here a different gadget which works equally well for all four styles, making use of some mechanics exclusive to Super Mario Maker 2.

\begin{proof}
    We reduce from planar reachability with the Inc-DecNZ-PZ gadget
    (Theorem~\ref{planar}).
    Figure \ref{smm2-ov} shows the gadget,
    which is largely made up of the following elements:

    \begin{figure}[ht]
    \centering
    \includegraphics[width=\textwidth]{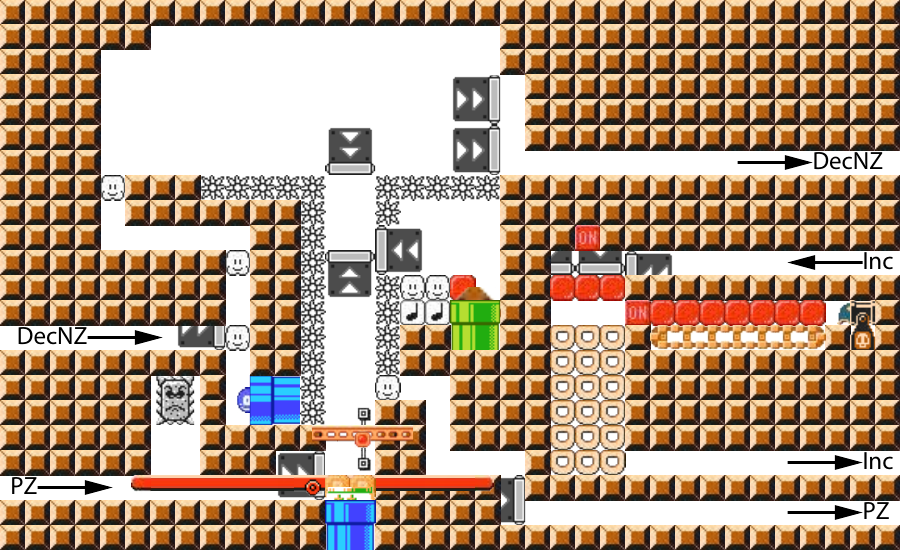}
    \caption{The complete Inc-DecNZ-PZ gadget for the normal game styles of Super Mario Maker 2.}
    \label{smm2-ov}
    \end{figure}

    \begin{itemize}
        \item \textbf{Solid ground:} \ICON{smm2/block}\ Standard blocks with no special ability.
        \item \textbf{Semisolid platforms/cloud blocks:} \ICON{smm2/mushroom}\ \ICON{smm2/cloud}\ Platforms which can be jumped through from below but are solid from above.
        \item \textbf{Donut blocks:} \ICON{smm2/donut}\ Semisolid blocks that begin to fall after Mario stands on them for a short amount of time. Placing many in a vertical drop effectively slows Mario's downward traversal.
        \item \textbf{Note blocks:} \ICON{smm2/note}\ Bouncy blocks that cause any entity that walks on them to be bounced upward. 
        \item \textbf{One-ways:} \ICON{smm2/one-way}\ Walls which allow entities (Mario, enemies, clown cars, etc.)\ to pass the white bar only in the direction of the arrows.
        \item \textbf{P-switches:} \ICON{smm2/p-switch}\ Switches which invert the state of P-blocks \ICON{smm2/p-block1}\ \ICON{smm2/p-block2}\ for the duration of a timer.
        \item \textbf{P-blocks:} \ICON{smm2/p-block1}\ \ICON{smm2/p-block2}\ These blocks flip from being outlines to being solid (or vice versa) while the P-switch \ICON{smm2/p-switch}\ is active. In this construction, they prevent the clown car \ICON{smm2/clown}\ from spawning out of the blue pipe unless the P-switch is active.
        \item \textbf{On/off switches:} \ICON{smm2/switch-on}\ \ICON{smm2/switch-off}\ When hit from below, these switches flip a global on/off state, toggling the solidity of on/off blocks: \ICON{smm2/red-on}\ \ICON{smm2/blue-off}\ vs.\ \ICON{smm2/red-off}\ \ICON{smm2/blue-on}.
        \item \textbf{On/off blocks:} \ICON{smm2/red-on}\ \ICON{smm2/red-off}\ \ICON{smm2/blue-on}\ \ICON{smm2/blue-off}\ These blocks come in and out of existence based on the state of the on/off switches \ICON{smm2/switch-on}\ \ICON{smm2/switch-off}:
        \ICON{smm2/red-on}\ \ICON{smm2/blue-off}\ in the ``on'' red state \ICON{smm2/switch-on}, and \ICON{smm2/red-off}\ \ICON{smm2/blue-on}\ in the ``off'' blue state \ICON{smm2/switch-off}.
        \item \textbf{Tracks:} \ICON{smm2/track}\ Rails specifying periodic movement of attached entities; platforms \ICON{smm2/platform}\ on tracks are global ground, meaning that they and entities on them do not despawn when offscreen.
        \item \textbf{Spikes:} \ICON{smm2/spike}\ Blocks which damage Mario upon contact.
        \item \textbf{Blaster:} \ICON{smm2/bill}\ These blasters shoot a shell \ICON{smm2/shell}\ in the direction of Mario if he is sufficiently close and the corresponding barrel of the blaster is not blocked. These shells can activate on/off switches \ICON{smm2/switch-on}\ \ICON{smm2/switch-off}.
        \item \textbf{Clown cars:} \ICON{smm2/clown}\ These vehicles allow exactly one falling enemy to enter and ride them. If a loaded clown car touches spikes \ICON{smm2/spike}, then it will panic and fly upward.
        \item \textbf{Goombas:} \ICON{smm2/goomba}\ Enemies which damage Mario when he contacts from any direction but above. Can safely walk on spikes and are defeated by being jumped on, giving Mario a vertical boost similar to a jump.
        We use big ($2 \times 2$) Goombas in this construction.
        \item \textbf{Pipes:} \ICON{smm2/pipe.png}\ Elements which periodically a particular entity (drawn next to the pipe) into the course. If the pipe spawns items, it will only do so if the last element that the pipe spawned is no longer loaded or exists; for example, a pipe will only spawn more clown cars \ICON{smm2/clown}\ if the last clown car it got destroyed. Enemies such as Goombas \ICON{smm2/goomba}\ spawn indefinitely. If either block in front of the pipe is blocked, the pipe is prevented from spawning its object.
        \item \textbf{Thwomps:} \ICON{smm2/thwomp}\ An enemy that charges downward if Mario is sufficiently close and at or below the level of the Thwomp, and then resets to its original location.
        \item \textbf{Seesaws:} \ICON{smm2/seesaw}\ A tilting platform that balances based on the weight on both sides of the center. If a Thwomp \ICON{smm2/thwomp}\ pounds on one side of the seesaw, it will launch up any enemies on the other side of the platform. This is used to get the Goombas \ICON{smm2/goomba}\ into a falling state so that they can be picked up by the clown car \ICON{smm2/clown}.
        \item \textbf{Conveyors:} \ICON{smm2/conveyor}\ A moving platform that can slow down a shell \ICON{smm2/shell}\ that is moving in the opposite direction. In this construction, it allows us to use less space to achieve the timing for the shell that deactivates an activated on/off switch \ICON{smm2/switch-on}.
    \end{itemize}

    As before, the state of the counter gadget is the number of Goombas \ICON{smm2/goomba}\ on the platform \ICON{smm2/platform}\ on the track \ICON{smm2/track}.

    \begin{figure}[ht]
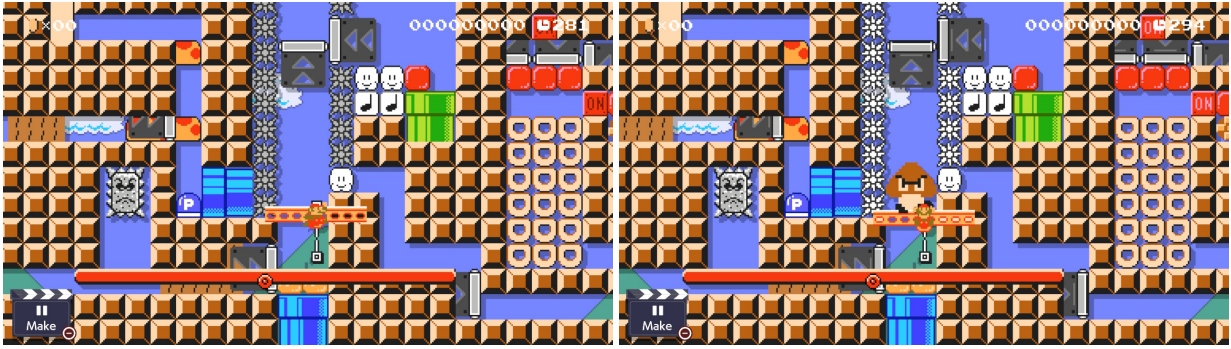

    \centering
    \begin{subfigure}[b]{0.49\linewidth}
        \centering
        \includegraphics[width=\linewidth]{figures/smm2/SMM2 - Good PZ.jpg}
        \caption{Mario can traverse through the PZ path if the counter = 0.}
        \label{smm2-pz-good}
    \end{subfigure}
    \begin{subfigure}[b]{0.49\linewidth}
        \centering
        \includegraphics[width=\linewidth]{figures/smm2/SMM2 - Bad PZ.jpg}
        \caption{Mario dies traversing the PZ path if the counter $>$ 0.}
        \label{smm2-pz-bad}
    \end{subfigure}
    \caption{Screenshots of traversing the PZ path for the normal game styles of Super Mario Maker 2.}
    \label{smm2-pz}
    \end{figure}

    \paragraph{PZ.}
    When the gadget is in state 0, there are no Goombas \ICON{smm2/goomba}\ on the platform \ICON{smm2/platform}, allowing the player to traverse the PZ path (which starts from the bottom-left tunnel and ends in the bottom-right tunnel of Figure \ref{smm2-pz-good}). When the gadget is in any state $>0$, there is at least one Goomba on the platform, preventing the player from traversing the PZ path without taking damage (and thus dying, as in Figure \ref{smm2-pz-bad}).

    \begin{figure}[ht]
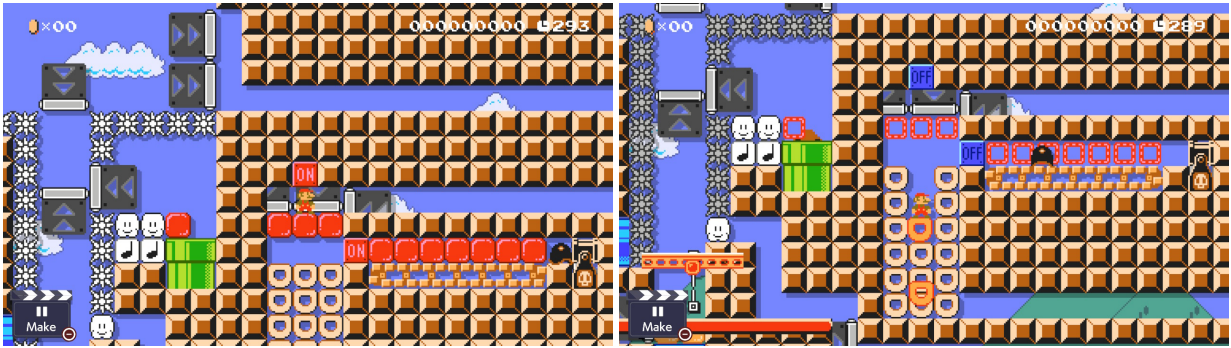
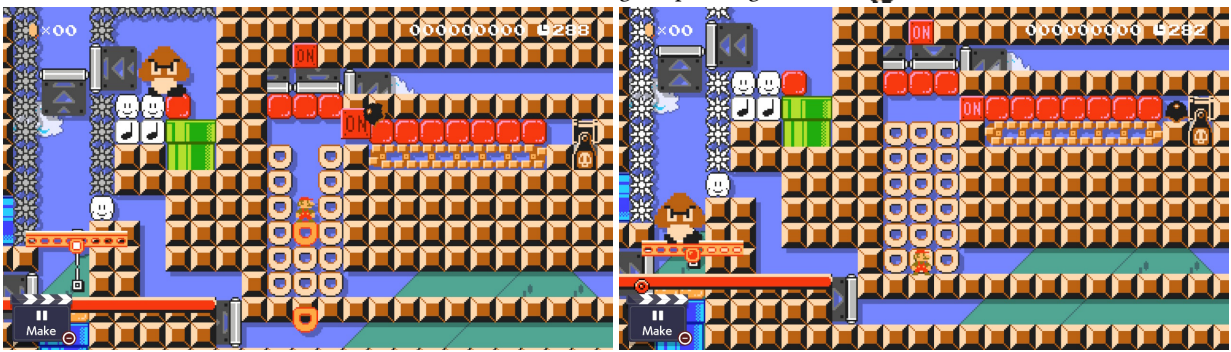

    \centering
    \begin{subfigure}[b]{0.49\linewidth}
        \centering
        \includegraphics[width=\linewidth]{figures/smm2/SMM2 - Inc 1.jpg}
        \caption{Mario hits the on/off block \ICON{smm2/switch-on}.\newline}
        \label{smm2-inc-1}
    \end{subfigure}
    \begin{subfigure}[b]{0.49\linewidth}
        \centering
        \includegraphics[width=\linewidth]{figures/smm2/SMM2 - Inc 2.jpg}
        \caption{The shell \ICON{smm2/shell}\ travels left as the now-unblocked pipe \ICON{smm2/pipe}\ begins spawning Goombas \ICON{smm2/goomba}.}
        \label{smm2-inc-2}
    \end{subfigure}
    \begin{subfigure}[b]{0.49\linewidth}
        \centering
        \includegraphics[width=\linewidth]{figures/smm2/SMM2 - Inc 3.jpg}
        \caption{The shell \ICON{smm2/shell}\ resets the on/off switch \ICON{smm2/switch-off}\ after exactly one Goomba \ICON{smm2/goomba}\ has spawned.}
        \label{smm2-inc-3}
    \end{subfigure}
    \begin{subfigure}[b]{0.49\linewidth}
        \centering
        \includegraphics[width=\linewidth]{figures/smm2/SMM2 - Inc 4.jpg}
        \caption{Mario is forced to stay loading the Goomba \ICON{smm2/goomba}\ until it has reached global ground.}
        \label{smm2-inc-4}
    \end{subfigure}
    \caption{Screenshots of traversing the Inc path for the normal game styles of Super Mario Maker 2.}
    \label{smm2-inc}
    \end{figure}

\paragraph{Inc.}
    When the player enters the increment path (whose entrance is the top-left tunnel and whose exit is the bottom-right tunnel in Figure \ref{smm2-inc-2}), the first tunnel is long enough that a shell \ICON{smm2/shell}\ will be launched left from the blaster \ICON{smm2/bill}\ and bounce within the single adjacent space. At the end of this tunnel, the leftward one-way \ICON{smm2/one-way}\ ensures that the player cannot hit the on/off switch \ICON{smm2/switch-on}\ without fully activating the increment, and the downward facing one-ways and on/off blocks below ensure that the player cannot hit the on/off switch more than once (all as shown in Figure \ref{smm2-inc-1}). 
    
    Once the on/off switch \ICON{smm2/switch-on}\ is triggered, the green pipe \ICON{smm2/pipe}, being no longer blocked from spawning, immediately starts spawning Goombas \ICON{smm2/goomba}. At the same time, the shell \ICON{smm2/shell}, no longer confined to a single space, starts moving left toward an on/off switch \ICON{smm2/switch-off}\ (both of which are shown in Figure \ref{smm2-inc-2}). The shell's distance to the switch, along with the speed of the conveyor \ICON{smm2/conveyor}\ and the spawning speed of the green pipe \ICON{smm2/pipe}, ensures that exactly one Goomba \ICON{smm2/goomba}\ can spawn before the pipe is again blocked from spawning by the shell hitting the on/off switch (as shown in Figure \ref{smm2-inc-3}). This single Goomba \ICON{smm2/goomba}\ moves left, bouncing up from walking on note blocks \ICON{smm2/note}, and falls onto the platform \ICON{smm2/platform}\ on the track \ICON{smm2/track}.
    
    The rows of donut blocks \ICON{smm2/donut}\ that Mario must endure at the end of this path ensures that the player keeps the Goomba \ICON{smm2/goomba}\ loaded on screen until it is on the global ground of the platform \ICON{smm2/platform}\ (as shown in Figure \ref{smm2-inc-4}), so that the counter value is correct.

    \begin{figure}[ht]
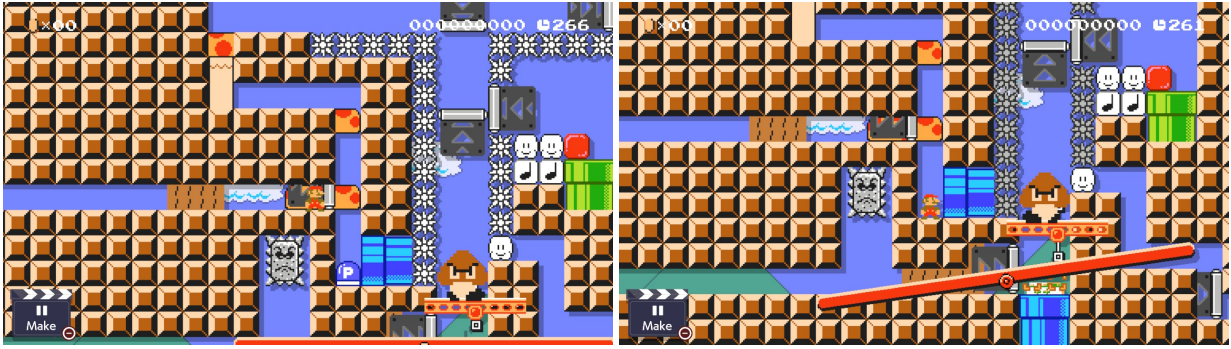
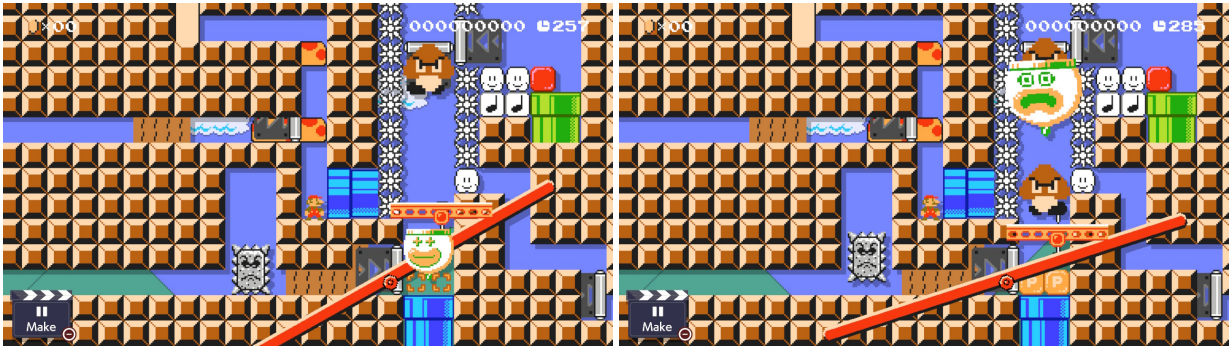
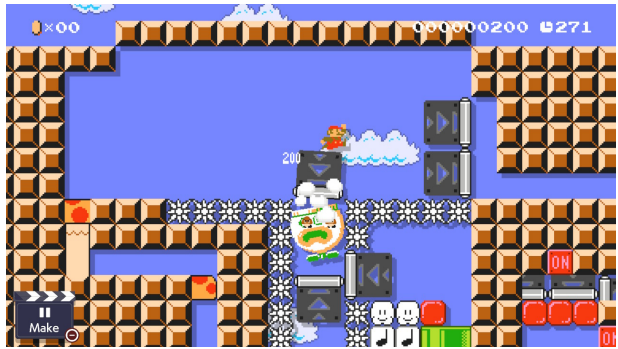

    \centering
    \begin{subfigure}[b]{0.49\linewidth}
        \centering
        \includegraphics[width=\linewidth]{figures/smm2/SMM2 - Dec 1.jpg}
        \caption{Mario falls onto the P-switch \ICON{smm2/p-switch}.\newline}
        \label{smm2-dec-1}
    \end{subfigure}
    \begin{subfigure}[b]{0.49\linewidth}
        \centering
        \includegraphics[width=\linewidth]{figures/smm2/SMM2 - Dec 2.jpg}
        \caption{The unblocked pipe \ICON{smm2/pipe}\ spawns a clown car \ICON{smm2/clown}; the Thwomp \ICON{smm2/thwomp}\ repeatedly charges downward.}
        \label{smm2-dec-2}
    \end{subfigure}

    \smallskip
    \begin{subfigure}[b]{0.49\linewidth}
        \centering
        \includegraphics[width=\linewidth]{figures/smm2/SMM2 - Dec 3.jpg}
        \caption{The charging Thwomp \ICON{smm2/thwomp}\ hits one side of the seesaw \ICON{smm2/seesaw}, launching the Goombas \ICON{smm2/goomba}\ upward.}
        \label{smm2-dec-3}
    \end{subfigure}
    \begin{subfigure}[b]{0.49\linewidth}
        \centering
        \includegraphics[width=\linewidth]{figures/smm2/SMM2 - Dec 4.jpg}
        \caption{A single Goomba \ICON{smm2/goomba}\ rides the panicked clown car \ICON{smm2/clown}\ upward.}
        \label{smm2-dec-4}
    \end{subfigure}

    \smallskip
    \begin{subfigure}[b]{0.49\linewidth}
        \centering
        \includegraphics[width=\linewidth]{figures/smm2/SMM2 - Dec 5.jpg}
        \caption{Mario needs to bounce on the Goomba \ICON{smm2/goomba}\ in the clown car \ICON{smm2/clown}\ to jump over the spikes \ICON{smm2/spike}.}
        \label{smm2-dec-5}
    \end{subfigure}
    \caption{Screenshots of traversing the DecNZ path for the normal game styles of Super Mario Maker 2.}
    \label{smm2-dec}
    \end{figure}

\paragraph{DecNZ.}
    Meanwhile, when the player enters the decrement path (whose entrance is the center-left tunnel in Figure \ref{smm2-dec-1} and whose exit is the center-right tunnel in Figure \ref{smm2-dec-5}), they fall onto a P-switch \ICON{smm2/p-switch}\ spawned from the the left blue pipe \ICON{smm2/pipe}, activating the P-switch, as shown in Figure \ref{smm2-dec-1}. (Note that Mario now blocks spawns from the pipe, so no new P-switches will spawn. And even if the player managed to fall into the hole exactly when the P-switch was spawning, since it is a 1-wide hole, they can never pick up the P-switch, and only fall onto it or block spawns from the pipe.)
    
    This causes the bottom blue pipe \ICON{smm2/pipe}, no longer being blocked from spawning, to spawn a clown car \ICON{smm2/clown}\ (as shown in Figure \ref{smm2-dec-2}). At the same time, falling into this hole is just far enough down to trigger the Thwomp \ICON{smm2/thwomp}, which repeatedly charges downward onto the seesaw \ICON{smm2/seesaw}. This, in turn, causes the opposite end of the seesaw to shoot up, launching the Goombas \ICON{smm2/goomba}\ (both of which are shown in Figure \ref{smm2-dec-3}). 
    
    Though it usually takes a few seconds for all of this to line up, eventually this results in the Goombas \ICON{smm2/goomba}\ being shot up while the clown car \ICON{smm2/clown}\ is under them, causing one Goomba to enter the clown car while the rest fall back onto the platform \ICON{smm2/platform}. Now that it is entered, the clown car panics from touching spikes \ICON{smm2/spike}, and flies upward until it gets stuck between the two one-ways \ICON{smm2/one-way}\ at the top of the spike column.
    Figure~\ref{smm2-dec-4} shows the separation of a single Goomba and panicking of the clown car.
    
    Now the player is free to jump up to the top tunnel of the decrement path, where they can clear the long jump over the spikes \ICON{smm2/spike}\ by bouncing on the Goomba \ICON{smm2/goomba}\ in the clown car \ICON{smm2/clown}\ in the middle of the jump (as shown in Figure \ref{smm2-dec-5}). If the player performs an illegal decrement from state 0, then the clown car will never fly up because it never gets entered, and Mario will not be able to clear the long jump over the spikes.
%
\end{proof}


\FloatBarrier
\subsection{Super Mario Maker 2 (3D World Style)}

The 3D world style of Super Mario Maker 2 has many different mechanics from that of the four regular styles, and so requires a different construction to show that it is RE-hard.

\begin{theorem}
    The Super Mario 3D World style of Super Mario Maker 2 is RE-complete.
\end{theorem}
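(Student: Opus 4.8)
The plan is to reduce from planar reachability with the Inc-DecNZ-PZ gadget (Theorem~\ref{planar}), which is RE-complete and, by the symmetric-self-closing-door construction, requires no separate crossover gadget. Exactly as in the other Super Mario Maker reductions, I would represent the counter's value by the number of big Goombas resting on a semisolid platform riding a track, so that the platform counts as global ground and the Goombas persist even after Mario's relevant screen moves away. The whole reduction then amounts to building three traversal paths --- Inc, DecNZ, and PZ --- that share this single counter, each forced (by spikes, one-ways, and bottomless falls) either to complete exactly as intended or else to kill Mario.

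The PZ and Inc components should transfer in spirit from the normal-style gadget. For PZ, I would route Mario directly through the narrow region occupied by the counter Goombas: in state $0$ the region is empty and Mario passes, whereas any Goomba present damages him and forfeits the run. For Inc, the goal is to force the addition of \emph{exactly one} Goomba. I would gate a Goomba-spawning pipe with an on/off block together with an automatically returning trigger (a bouncing shell, or an equivalent 3D-World timing element on a conveyor), so that a single activation opens the gate just long enough for one Goomba to spawn before it recloses; one-ways on both sides of the trigger ensure Mario can activate it at most once per traversal. The new Goomba is then routed onto the counter platform (via note blocks, a clear pipe, or a forced fall), and a stretch of donut blocks keeps Mario onscreen until that Goomba has landed on global ground, so the count is correct before Mario leaves.

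For DecNZ, I would isolate exactly one counter Goomba into a single-Goomba chamber at one end of the platform, then use a vehicle-based lift --- here the Koopa Troopa Car available in the 3D World style in place of the clown car of the normal styles --- launched by a seesaw/Thwomp or note-block bounce, to carry precisely that one Goomba off the counter, decrementing by one. The traversal is made survivable only when the lift actually captured a Goomba: Mario must clear a spike pit too wide for a single jump, which is possible only by bouncing off the lifted Goomba, so an attempted decrement from state $0$ (where no Goomba is captured) is fatal, enforcing the NZ condition. Wiring together copies of the completed gadget according to the planar Inc-DecNZ-PZ reachability instance then finishes the reduction.

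The main obstacle is that the 3D World style changes or removes several mechanics the normal-style gadget relied on --- different item-carrying and vehicle behavior, different switch and spawner timing, and the substitution of clear pipes and 3D-World-specific vehicles for the earlier transport tricks. Consequently the hard part is re-engineering the ``exactly one'' enforcement in Inc and DecNZ, and the zero detection in PZ, using only the 3D World repertoire: each must be pinned down with precise spacing, spawn frequencies, and one-way placements so that no sequence of Mario's inputs can leak an extra increment or decrement, keep a stray Goomba loaded incorrectly, or bypass the zero check. I expect verifying that these three components are genuinely robust under 3D World physics --- rather than the high-level wiring --- to be where essentially all the work lies.
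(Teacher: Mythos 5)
Your high-level framework matches the paper's (enemy count on global ground as the counter; three forced paths for Inc, DecNZ, PZ; reduction from Inc-DecNZ-PZ reachability), but the concrete mechanisms you propose rely on elements that simply do not exist in the 3D World style---which is precisely why the paper needed a fresh construction rather than a port. Specifically: tracks are absent in 3D World, so ``a semisolid platform riding a track'' cannot provide global ground; the paper instead exploits the fact that \emph{cloud blocks} are inherently global ground in this style only, and uses Koopas standing on them as the counter (3D World Goombas also behave differently, chasing Mario). One-way walls are likewise absent, so your ``one-ways on both sides of the trigger'' and single-enemy chambers cannot be built as stated; the paper simulates one-ways with columns of four donut blocks (solid, so Mario can drop through but not return before they respawn). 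Note blocks and seesaws are also not in the 3D World repertoire, so your proposed launch mechanism has no implementation.

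The most serious failure is your DecNZ. The Koopa Troopa Car is not a substitute for the clown car: it is a ground vehicle driven by Mario, it cannot be entered by a falling enemy during play, and it has no panic-and-fly-upward behavior on spikes, so there is no way for it to lift a captured enemy into Mario's jump path. The paper's decrement works on an entirely different principle: a P-switch opens a path so that the Spike enemy throws a snowball through a clear pipe into the mass of Koopas, and a rolling snowball kills \emph{exactly one} enemy in a group before being destroyed---that collision rule is the whole decrement. The NZ condition is then enforced not by a bounce (as in the normal styles) but by soft-locking: if the counter is zero, the snowball rolls past the empty cloud blocks, hits an on/off switch, and traps Mario behind blue on/off blocks, so an illegal decrement strands him rather than letting him proceed. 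Your proposal correctly predicts that ``re-engineering using only the 3D World repertoire'' is where the work lies, but the mechanisms you actually commit to are outside that repertoire, so the proof as written does not go through.
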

\begin{proof}
    We reduce from reachability with the Inc-DecNZ-PZ gadget.
    Figure \ref{smm2-3dw-ov} shows the gadget,
    which is largely made up of the following elements:

    \begin{figure}[ht]
    \centering
    \includegraphics[width=\textwidth]{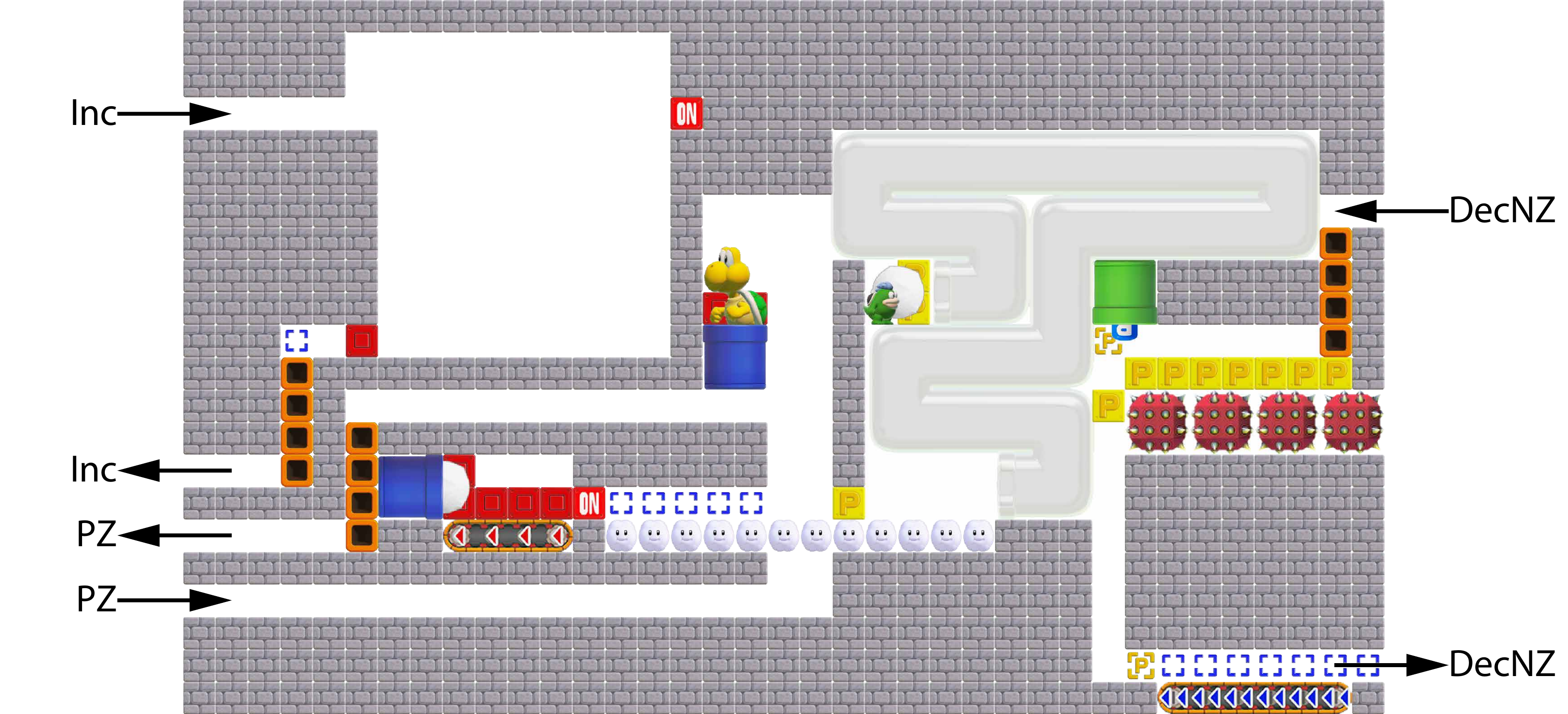}
    \caption{The complete Inc-DecNZ-PZ gadget for the 3D World game style of Super Mario Maker 2}
    \label{smm2-3dw-ov}
    \end{figure}

    \begin{itemize}
        \item \textbf{Solid ground:} \ICON{smm3d/block.png}\ Standard blocks with no special effects.
        \item \textbf{Cloud blocks:} \ICON{smm3d/cloud}\ Platforms which can be jumped through from below but are solid from above. In the 3D World style only, cloud blocks are global ground, which means that which once spawned, they do not despawn and confer this property to enemies that are standing on them, allowing us to preserve the value of the counter even after it moves offscreen. The number of Koopas \ICON{smm3d/koopa}\ on the cloud blocks represents the state of the counter.
        \item \textbf{Donut blocks:} \ICON{smm3d/donut}\ Semisolid blocks that begin to fall after Mario stands on them for a short amount of time. Placing many in a vertical drop effectively slows Mario's downward traversal. Because they are solid, placing four in a column enforces a one-way constraint, since by the time Mario has dropped the last donut block, the first will have respawned.
        \item \textbf{P-switches:} \ICON{smm3d/p-switch}\ Switches which invert the state of P-blocks \ICON{smm3d/p-block1}\ \ICON{smm3d/p-block3d}\ for the duration of a timer.
        \item \textbf{P-blocks:} \ICON{smm3d/p-block1}\ \ICON{smm3d/p-block3d}\ These blocks flip from being outlines to being solid (or vice versa) while the P-switch \ICON{smm3d/p-switch}\ is active.
        \item \textbf{On/off switches:} \ICON{smm3d/switch-on}\ \ICON{smm3d/switch-off}\ When hit from below, these switches flip a global on/off state, toggling the solidity of on/off blocks: \ICON{smm3d/red-on}\ \ICON{smm3d/blue-off}\ vs.\ \ICON{smm3d/red-off}\ \ICON{smm3d/blue-on}.
        \item \textbf{On/off blocks:} \ICON{smm3d/red-on}\ \ICON{smm3d/red-off}\ \ICON{smm3d/blue-on}\ \ICON{smm3d/blue-off}\ These blocks come in and out of existence based on the state of the on/off switches \ICON{smm3d/switch-on}\ \ICON{smm3d/switch-off}:
        \ICON{smm3d/red-on}\ \ICON{smm3d/blue-off}\ in the ``on'' red state \ICON{smm3d/switch-on}, and \ICON{smm3d/red-off}\ \ICON{smm3d/blue-on}\ in the ``off'' blue state \ICON{smm3d/switch-off}.
        \item \textbf{Red spike blocks:} \ICON{smm3d/spike-on}\ \ICON{smm3d/spike-off}\ Solid terrain blocks with retractable spikes. When the on/off switch is in the ``on'' red state \ICON{smm3d/switch-on}, these blocks extend their spikes \ICON{smm3d/spike-on}, which damage Mario if he comes into contact with them.
        \item \textbf{Koopas:} \ICON{smm3d/koopa}\ Enemies which damage Mario when he contacts from any direction but above. The number of Koopas on the cloud blocks is the state of the counter.
        \item \textbf{Spike:} \ICON{smm3d/spike}\ An enemy that continually throws snowballs \ICON{smm3d/snowball}\ on a global timer. Rolling snowballs are destroyed if they collide with something else, like a P-block \ICON{smm3d/p-block3d}\ or on/off switch \ICON{smm3d/switch-on}\ \ICON{smm3d/switch-off}. They will also trigger the on/off switch if they roll into it. Finally, if they collide with a group of enemies such as Koopas \ICON{smm3d/koopa}, a rolling snowball will kill exactly one enemy before being destroyed.
        \item \textbf{Pipes:} \ICON{smm3d/pipe.png}\ Elements which periodically a particular entity (drawn next to the pipe) into the course. If the pipe spawns items such as a snowball \ICON{smm3d/snowball}\ or P-switch \ICON{smm3d/p-switch}, it will only do so if the last element that the pipe spawned is no longer loaded or exists. Enemies such as Koopa \ICON{smm3d/koopa}\ spawn indefinitely. If either block in front of the pipe is blocked, the pipe is prevented from spawning its object. Different colors of pipes spawn objects at different rates; the specific pipe colors used in this construction were chosen for timing purposes. In this construction, the left blue pipe spawns snowballs \ICON{smm3d/snowball}, the center blue pipe spawns Koopas \ICON{smm3d/koopa}, and the green pipe spawns P-switches \ICON{smm3d/p-switch}.
        \item \textbf{Clear pipes:} \ICON{smm3d/clear}\ Tubes that allow for precise item transport. By throwing an object, such as a snowball \ICON{smm3d/snowball}, into one end of the pipe, the object travels the length of the clear pipe at a fixed speed and comes out of the other end as if still thrown.
        \item \textbf{Conveyors:} \ICON{smm3d/conveyor}\ A moving platform that can slow down a snowball \ICON{smm3d/snowball}\ that is moving in the opposite direction. In this construction, it allows us to use less space to achieve the timing for the snowball that deactivates an activated on/off switch \ICON{smm3d/switch-on}.
    \end{itemize}

    \begin{figure}[ht]
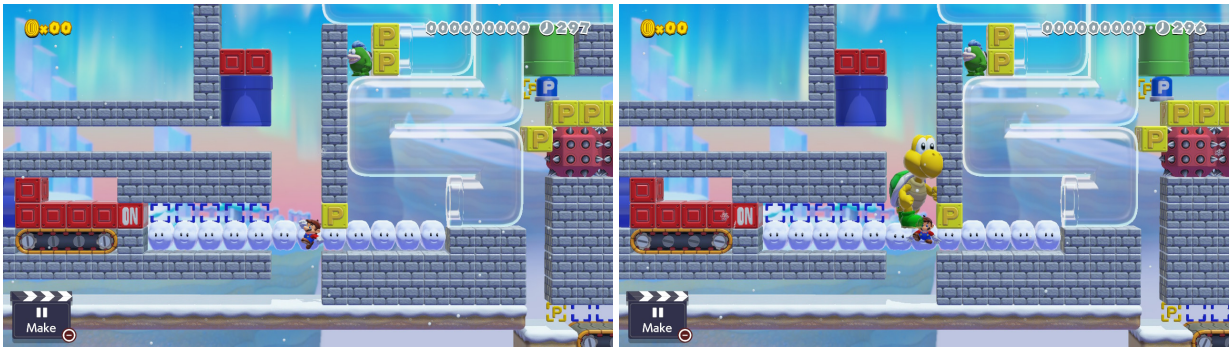

    \centering
    \begin{subfigure}[b]{0.49\linewidth}
        \centering
        \includegraphics[width=\linewidth]{figures/smm2/SMM2 3DW - Good PZ.jpg}
        \caption{Mario can traverse through the PZ gadget if the counter = 0.}
        \label{smm2-3dw-pz-good}
    \end{subfigure}
    \begin{subfigure}[b]{0.49\linewidth}
        \centering
        \includegraphics[width=\linewidth]{figures/smm2/SMM2 3DW - Bad PZ.jpg}
        \caption{Mario dies traversing the PZ gadget if the counter $>$ 0.}
        \label{smm2-3dw-pz-bad}
    \end{subfigure}
    \caption{Screenshots of traversing the PZ gadget for the 3D World game style of Super Mario Maker 2.}
    \label{smm2-3dw-pz}
    \end{figure}
    \paragraph{Traverse.}
    When the gadget is in state 0, there are no Koopas \ICON{smm3d/koopa}\ on the cloud blocks \ICON{smm3d/cloud}, allowing Mario to traverse the PZ path (whose entrance is the bottom-left tunnel and whose exit is the middle-left tunnel in Figure \ref{smm2-3dw-pz-good}). Note that Mario can also enter the area near the middle blue pipe, but this accomplishes nothing for the player. 
    
    In the mechanics of 3D World, Mario can roll into the on/off switch \ICON{smm3d/switch-on}\ at the left end of the row of cloud blocks, trapping Mario in on/off blocks \ICON{smm3d/blue-on}\ and allowing the middle blue pipe to start spawning a Koopa \ICON{smm3d/koopa}\ (as explained later, only a single Koopa spawns because the on/off switch is triggered again by the snowball \ICON{smm3d/snowball}\ from the left blue pipe). Mario can then roll into the on/off switch \ICON{smm3d/switch-off}\ again to become free (and thereby kill the snowball from the left blue pipe), or wait for the snowball to turn off the on/off switch. Either Mario triggers the switch again in time to kill the spawning Koopa and thus does not affect the counter, or they trigger the switch too late (or allow it to be reversed by the snowball) in which case the Koopa \ICON{smm3d/koopa}\ falls onto cloud blocks \ICON{smm3d/cloud}\ and traps Mario before he can escape. Also, when the gadget is in any state $>0$, at least one Koopa on the cloud blocks similarly prevents Mario from traversing the PZ path without taking damage and thus dying, as in Figure \ref{smm2-3dw-pz-bad}.

    \begin{figure}[ht]
    \centering
    \begin{subfigure}{0.49\linewidth}
        \centering
        \includegraphics[width=\linewidth]{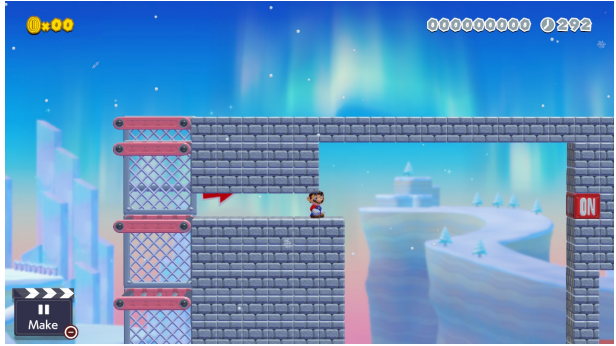}
        \caption{Mario long jumps toward the on/off block \ICON{smm3d/switch-on}.\newline}
        \label{smm2-3dw-inc-1}
    \end{subfigure}\hfill
    \begin{subfigure}{0.49\linewidth}
        \centering
        \includegraphics[width=\linewidth]{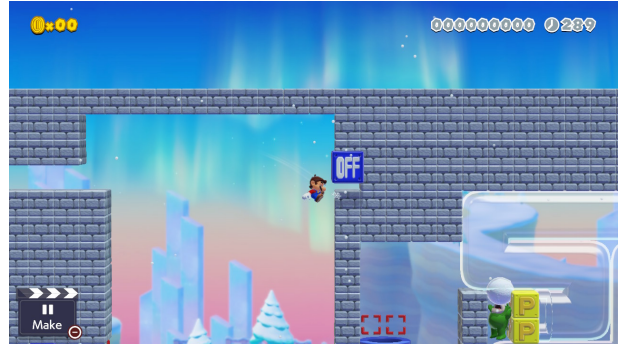}
        \caption{Mario toggles the on/off block \ICON{smm3d/switch-off}\ and the now unblocked right blue pipe \ICON{smm3d/pipe}\ begins spawning Koopas \ICON{smm3d/koopa}.}
        \label{smm2-3dw-inc-2}
    \end{subfigure}

    \smallskip
    \begin{subfigure}[t]{0.49\linewidth}
        \centering
        \includegraphics[width=\linewidth]{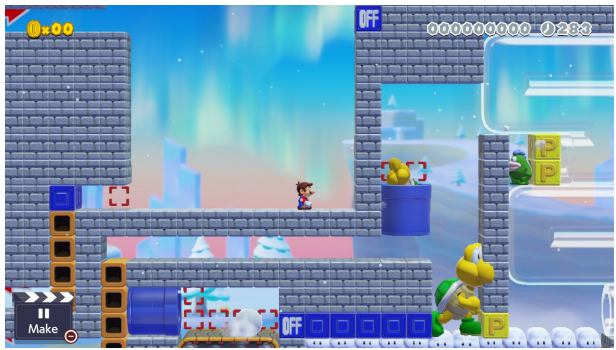}
        \caption{A snowball \ICON{smm3d/snowball}\ also spawns from the bottom blue pipe \ICON{smm3d/pipe}\ and rolls towards an on/off switch \ICON{smm3d/switch-off}.\newline\newline\newline\newline}
        \label{smm2-3dw-inc-3}
    \end{subfigure}\hfill
    \begin{subfigure}[t]{0.49\linewidth}
        \centering
        \includegraphics[width=\linewidth]{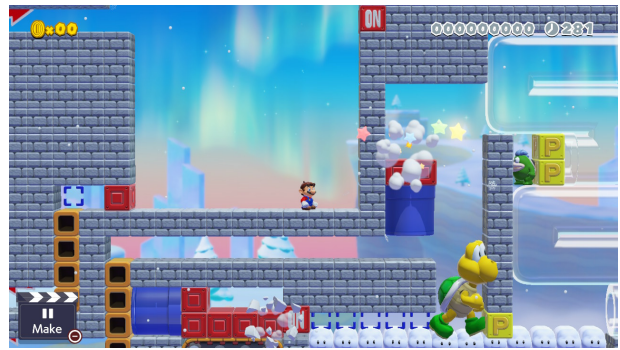}
        \caption{The snowball \ICON{smm3d/snowball}\ hits the on/off switch \ICON{smm3d/switch-on}, which blocks both blue pipes \ICON{smm3d/pipe}\ from spawning, allowing only one Koopa \ICON{smm3d/koopa}\ to be spawned and land on the global-ground cloud block \ICON{smm3d/cloud}. (Note that Mario should instead be to the left of the red block \ICON{smm3d/red-on}\ in a normal traversal; his current position is just to clearly see the gadget at work.)}
        \label{smm2-3dw-inc-4}
    \end{subfigure}

    \smallskip
    \begin{subfigure}{0.49\linewidth}
        \centering
        \includegraphics[width=\linewidth]{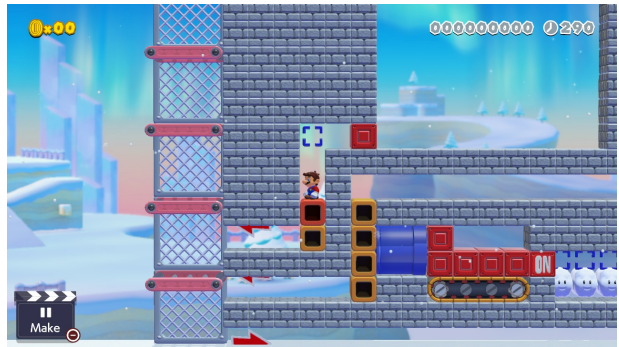}
        \caption{Mario exits through a stack of donut blocks \ICON{smm3d/donut}, which simulates a one-way.}
        \label{smm2-3dw-inc-5}
    \end{subfigure}
    \caption{Screenshots of traversing the Inc path for the 3D World game style of Super Mario Maker 2.}
    \label{smm2-3dw-inc}
    \end{figure}
\paragraph{Increment.}
    When Mario enters the increment path (whose entrance is the center-left tunnel in Figure \ref{smm2-3dw-inc-1} and whose exit is the center-left tunnel in Figure \ref{smm2-3dw-inc-5}), he must long jump to the right in order to hit an on/off switch \ICON{smm3d/switch-on}\ (as shown in Figure \ref{smm2-3dw-inc-1}). If he fails to do so, then he will be soft-locked when he falls into the chamber: it is wide enough that wall-jumping does not gain height, and tall enough that Mario cannot get back out with a twirl jump. 
    
    After Mario hits the switch, the pathway to the left is slightly opened, allowing the player to pass a deactivated red on/off block \ICON{smm3d/red-off}\ but still blocked from progressing by an activated blue on/off block \ICON{smm3d/blue-on}. At the same time, the deactivated on/off switch \ICON{smm3d/switch-off}\ means that the middle blue pipe \ICON{smm3d/pipe}\ is no longer blocked and begins spawning a Koopa \ICON{smm3d/koopa}\ (as shown in Figure \ref{smm2-3dw-inc-2}). Also, the left blue pipe is no longer blocked and spawns a snowball \ICON{smm3d/snowball}\ that rolls to the right against a conveyor \ICON{smm3d/conveyor}\ towards an on/off block \ICON{smm3d/switch-off}\ (as shown in Figure \ref{smm2-3dw-inc-3}). 
    
    The spawn rate of the pipes \ICON{smm3d/pipe}, the speed of the conveyor \ICON{smm3d/conveyor}, and the snowball's \ICON{smm3d/snowball}\ distance to the switch \ICON{smm3d/switch-off}\ are all timed so that there is enough time for exactly one Koopa \ICON{smm3d/koopa}\ to spawn before the switch is hit, blocking the blue pipes from spawning more Koopas or snowballs (as shown in Figure \ref{smm2-3dw-inc-4}) and allowing the player to continue through a drop of four donut blocks \ICON{smm3d/donut}\ that serves as a one-way to ensure that the gadget cannot be traversed in the wrong direction (as shown in Figure \ref{smm2-3dw-inc-5}).

    \begin{figure}
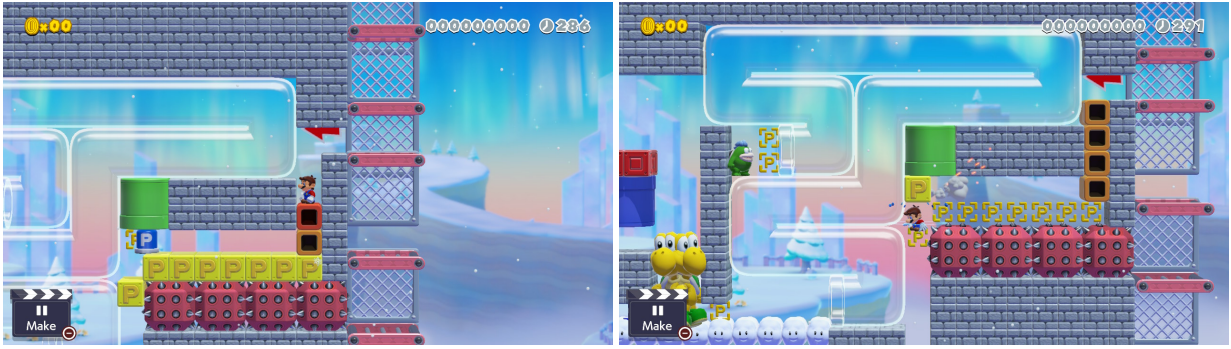
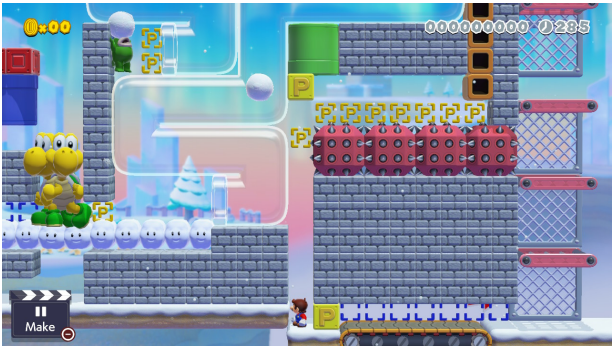
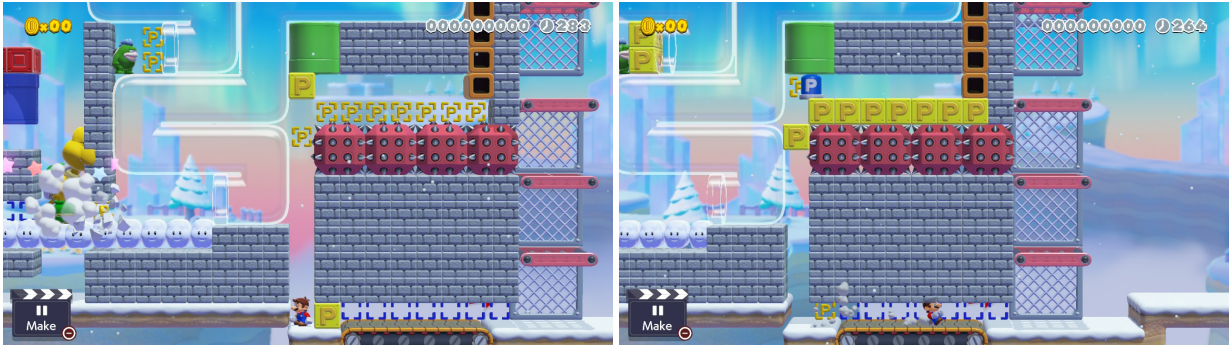
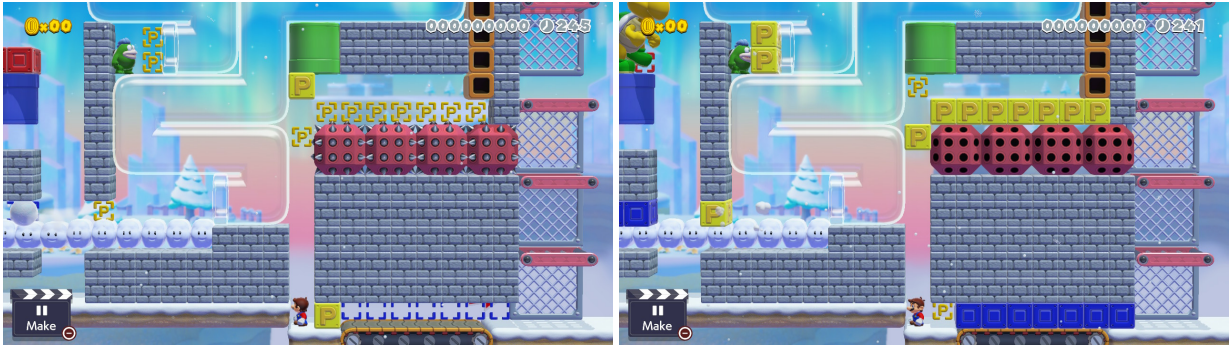

    \centering
    \begin{subfigure}[b]{0.49\linewidth}
        \centering
        \includegraphics[width=\linewidth]{figures/smm2/SMM2 3DW - Dec 1.jpg}
        \caption{Mario enters through a stack of donut blocks \ICON{smm3d/donut}\ which simulate a one-way.}
        \label{smm2-3dw-dec-1}
    \end{subfigure}
    \begin{subfigure}[b]{0.49\linewidth}
        \centering
        \includegraphics[width=\linewidth]{figures/smm2/SMM2 3DW - Dec 2.jpg}
        \caption{Mario hits the P-switch \ICON{smm3d/p-switch}, which opens the path for Spike \ICON{smm3d/spike}\ to throw a snowball \ICON{smm3d/snowball}\ at the Koopas \ICON{smm3d/koopa}.}
        \label{smm2-3dw-dec-2}
    \end{subfigure}

    \smallskip
    \begin{subfigure}[b]{0.49\linewidth}
        \centering
        \includegraphics[width=\linewidth]{figures/smm2/SMM2 3DW - Dec 3.jpg}
        \caption{Spike \ICON{smm3d/spike}\ throws a snowball \ICON{smm3d/snowball}, which travels along the clear pipe \ICON{smm3d/clear}\ toward the Koopas \ICON{smm3d/koopa}.}
        \label{smm2-3dw-dec-3}
    \end{subfigure}
    
    \smallskip
    \begin{subfigure}[b]{0.49\linewidth}
        \centering
        \includegraphics[width=\linewidth]{figures/smm2/SMM2 3DW - Dec 4.jpg}
        \caption{If there at least one Koopa \ICON{smm3d/koopa}, the snowball \ICON{smm3d/snowball}\ collides with it, killing exactly one Koopa and destroying the snowball.}
        \label{smm2-3dw-dec-4}
    \end{subfigure}
    \begin{subfigure}[b]{0.49\linewidth}
        \centering
        \includegraphics[width=\linewidth]{figures/smm2/SMM2 3DW - Dec 5.jpg}
        \caption{Then Mario exits after the P-switch \ICON{smm3d/p-switch}\ ends.\newline}
        \label{smm2-3dw-dec-5}
    \end{subfigure}
    
    \smallskip
    \begin{subfigure}[b]{0.49\linewidth}
        \centering
        \includegraphics[width=\linewidth]{figures/smm2/SMM2 3DW - Bad Dec 1.jpg}
        \caption{If there are no Koopas \ICON{smm3d/koopa}, the snowball \ICON{smm3d/snowball}\ rolls past, toward an on/off switch \ICON{smm3d/switch-on}\ offscreen to the left.}
        \label{smm2-3dw-bad-dec-1}
    \end{subfigure}
    \begin{subfigure}[b]{0.49\linewidth}
        \centering
        \includegraphics[width=\linewidth]{figures/smm2/SMM2 3DW - Bad Dec 2.jpg}
        \caption{The on/off switch \ICON{smm3d/switch-off}\ is hit before the P-switch \ICON{smm3d/p-switch}\ ends, trapping Mario by the blue blocks \ICON{smm3d/blue-on}.}
        \label{smm2-3dw-bad-dec-2}
    \end{subfigure}
    \caption{Screenshots for traversing the DecNZ path for the 3D World game style of Super Mario Maker 2.}
    \label{smm2-3dw-dec}
    \end{figure}

\paragraph{Decrement.}
    When Mario enters the decrement path (whose entrance is the center-right tunnel in Figure \ref{smm2-3dw-dec-1} and whose exit is the bottom-right tunnel in Figure \ref{smm2-dec-5}), he once again goes through a drop of four donut blocks \ICON{smm3d/donut}\ that serves as a one-way to assure that he cannot take the P-switch \ICON{smm3d/p-switch}\ out of the decrement-path entrance (as shown in Figure \ref{smm2-3dw-dec-1}). 
    
    The tunnel has only one position (on the far left) that is two squares high, so this is the only place that Mario can activate the P-switch \ICON{smm3d/p-switch}\ (as shown in Figure \ref{smm2-3dw-dec-2}). The spike blocks \ICON{smm3d/spike-off}\ underneath the P-blocks \ICON{smm3d/p-block3d}\ ensure that Mario must fully move to the leftmost space when triggering the P-switch, to avoid taking damage \ICON{smm3d/spike-on}. The fact that the spike block is also along the side of the top of this drop chute ensures that Mario cannot try to break the gadget by wall-jumping up right as the P-switch timer ends. And the drop's height is large enough to not be undone by a twirl jump.
    
    Now that Mario is trapped in this chute and blocked from leaving by a P-block \ICON{smm3d/p-block3d}\ while the P-switch timer is running, Spike \ICON{smm3d/spike}\ throws a snowball \ICON{smm3d/snowball}\ into the clear pipe \ICON{smm3d/clear}\ (as shown in Figure \ref{smm2-3dw-dec-3}) and hits the mass of Koopas \ICON{smm3d/koopa}\ with the snowball, killing exactly one Koopa (as shown in Figure \ref{smm2-3dw-dec-4}). The length of the pipe and placement of P-blocks is carefully tuned so that exactly one snowball will get from Spike into the Koopa mass. 
    
    Lastly, if Mario tries to decrement in the state $=0$, then the snowball \ICON{smm3d/snowball}\ will continue rolling forward until it hits the on/off block \ICON{smm3d/switch-on}\ (as shown in Figure \ref{smm2-3dw-bad-dec-1}), breaking the counter but also soft-locking the player by activating the blue on/off blocks \ICON{smm3d/blue-on}\ to the right (as shown in Figure \ref{smm2-3dw-bad-dec-2}). The conveyor \ICON{smm3d/conveyor}\ prevents the player from moving too far away from the snowball before it is able to hit the on/off switch and trap Mario.
    %
    %
\end{proof}



\section{Open Problems}
\label{sec:open}

Of the 2D Mario Games released since New Super Mario Bros., we have shown that all except for Super Mario Wonder are undecidable, and a natural open question is whether it is too. There is evidence which suggests that it might be based on the presence of events and infinitely spawning Goombas, but the game is still very new, and more research is needed to understand the mechanics of the game well enough to make further claims about undecidability. 

There are also several older 2D Mario Games which have evidence that they might be able to build counters. In particular, any game with a Lakitu has a way of generating unlimited numbers of Spinies, Super Mario Bros.\ 3 and Super Mario World 2: Yoshi's Island both have enemies that can be generated from pipes, and Super Mario World has enemies generated by falling from the sky. Can we use any of the mechanics in those games to build counters? If any of these other games are not undecidable, it would also be noteworthy if we can obtain any other upper bounds on their complexity.

Finally, we showed that all games considered here are hard in constant-size levels, but it is not certain exactly what that constant is. We provided an explicit example of a single-screen counter in New Super Mario Bros.\ Wii, but we can definitely compact it further if we want to build the smallest possible universal counter machine. Furthermore, we can consider the Super Mario Maker games and whether it is possible to build a universal counter machine that fits inside of the standard constraint on level size.

\section*{Acknowledgments}

This paper was initiated during open problem solving in the MIT class
on Algorithmic Lower Bounds: Fun with Hardness Proofs (6.5440)
taught by Erik Demaine in Fall 2023.
We thank the other participants of that class for helpful discussions
and providing an inspiring atmosphere.

Portions of this paper originally appeared in Ani's master's thesis \cite{uuu}.

Several community level editors and emulators were very helpful in building and testing counters. In particular, we would like to thank:
\begin{itemize}
    \item \textit{Reggie!} by the NSMBW Community --- \url{https://github.com/NSMBW-Community/Reggie-Updated}
    \item \textit{CoinKiller} by Arisotura --- \url{https://github.com/Arisotura/CoinKiller}
    \item \textit{Miyamoto} by aboood40091 --- \url{https://github.com/aboood40091/Miyamoto}
    \item \textit{Dolphin} by the Dolphin Emulator Project --- \url{https://dolphin-emu.org/}
    \item \textit{Cemu} by Team Cemu --- \url{https://cemu.info/}
    \item \textit{Citra} by Citra Emu --- \url{https://github.com/citra-emu/citra}
\end{itemize}

\bibliographystyle{alpha}
\bibliography{citations}

\end{document}